\newcommand{\Real}{\mathbb{R}}
\newcommand{\E}{\mathbb{E}}
\newcommand{\Rlogo}{\protect\includegraphics[height=1.8ex,keepaspectratio]{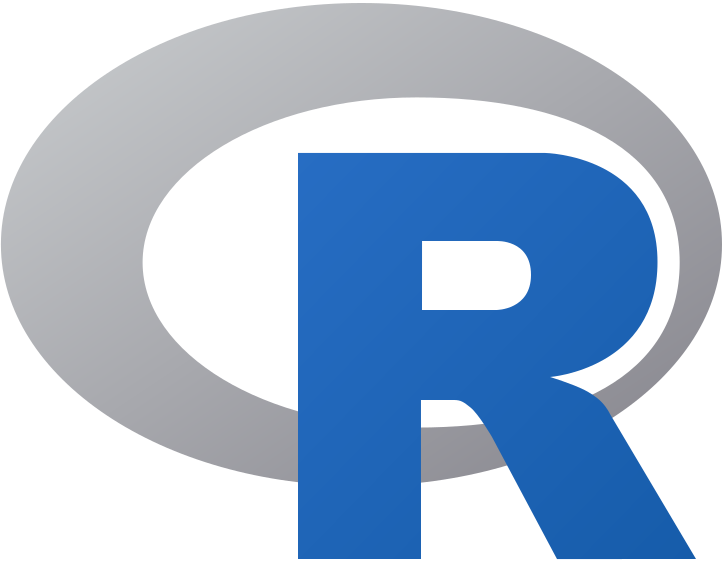}}
\newtheorem{proposition}{Proposition}
\newtheorem{example}{Example}
\newtheorem{definition}{Definition}
\newtheorem{remark}{Remark}
 \newtheorem{proof}{Proof}
\begin{document}

\title{A new Bayesian discrepancy measure}
\date{}

\author{
\small Francesco Bertolino$^{\rm a}$, Mara Manca$^{\rm a}$, Monica Musio$^{\rm a}$, Walter Racugno$^{\rm a}$ and Laura Ventura$^{\rm b}$ \\
\footnotesize{$^{\rm a}$ {\em{Dipartimento di Matematica ed Informatica, Universit\`a di Cagliari}}} \\
\footnotesize{$^{\rm b}$ {\em{Dipartimento di Scienze Statistiche, Universit\`a di Padova}}}}

\maketitle

\begin{abstract}
The aim of this article is to make a contribution to the Bayesian procedure of testing precise  hypotheses for parametric models. For this purpose, we define the Bayesian Discrepancy Measure that allows one to evaluate the suitability of a given hypothesis with respect to the available information (prior law and data). To summarise this information, the posterior median is employed, allowing a simple assessment of the discrepancy with a fixed hypothesis. The Bayesian Discrepancy Measure assesses the compatibility of a single hypothesis with the observed data, as opposed to the more common comparative approach where a hypothesis is rejected in favour of a competing hypothesis.
The proposed measure of evidence has properties of consistency and invariance. After presenting the definition of the measure for a parameter of interest, both in the absence and in the presence of nuisance parameters, we illustrate some examples showing its conceptual and interpretative simplicity. Finally, we compare the BDT with the Full Bayesian Significance Test, a well-known Bayesian testing procedure for sharp hypotheses.\\
% that allow for easy handling of complex case studies. \\

\noindent
{\bf Keywords:} \emph{Bayesian test, Evidence, Precise hypothesis, Significance test, Full Bayesian Significance Test}
\end{abstract}

%\begin{keyword}[class=MSC]
%\kwd[Primary ]{62F15}\kwd{62F03}
%\kwd[; Secondary ]{62A}\kwd{62C10}\input{BDM-BA-revisione.tex}
%
%\end{keyword}
%
%
%\begin{keyword}
%\kwd{Bayesian test}
%\kwd{Evidence}
%\kwd{Sharp hypothesis}
%\kwd{Significance test}
%\end{keyword}
%
%\end{frontmatter}

\section{Introduction}

%\begin{acks}[Acknowledgments]
%And this is an acknowledgements section with a heading that was produced by the
%$\backslash$section* command. Thank you all for helping me writing this
%\LaTeX\ sample file.
%\end{acks}

\label{sec1}

D. V. Lindley in \cite{lind65} (preface page xi)  stated that
 \begin{quote}``\textit{
 \dots hypothesis testing looms large in standard statistical
 practice,  yet scarcely appears  as such in the Bayesian literature}.''
 \end{quote}

\noindent Since then things have changed and in the last sixty years there have been several attempts to build a measure of evidence that covers, in a Bayesian context, the role that the \textit{p-value} has played in the frequentist setting. A prominent example is the decision test based on the Bayes Factor and its extensions (see, for instance, \cite{ber85}). 
  
As an alternative to the Bayes Factor, another Bayesian evidence measure is provided in \cite{perstern99} upon which the testing procedure Full Bayesian Signicance Test (FBST) is based.  
%Although the FBST has been successfully applied to several relevant problems of statistical inference (see \cite{persternwech08}), it lacks of invariance under reparametrizations, at least in its first formulation.
For a recent survey on the FBST see \cite{perstern20}.

The main aim of this paper is to give a contribution to the  testing  procedure of precise hypotheses. In particular, the proposed Bayesian measure of evidence, called Bayesian Discrepancy Measure (BDM), gives an absolute evaluation of a hypothesis $H$ in light of prior knowledge about the parameter and observed data.  The proposed measure of evidence has the desired properties of invariance under reparametrization and consistency for large samples.

Our starting point is the idea that a hypothesis may be more or less supported
by  the available evidence contained in the posterior distribution. 

We do not adopt the hypothesis testing approach for which there is no test that can lead to the rejection of a hypothesis except by comparing it with another hypothesis (Neyman-Pearson  in the frequentist perspective, Bayes factor in the Bayesian one), but rather the approach proposed by Fisher (see \cite{christensen2005testing} and  \cite{denis2004}). 
Reference is made to a precise hypothesis $H$ and no alternative is considered against  such hypothesis.  
In this view different hypotheses made by several experts can be evaluated  and  using the information coming from the same data, some can be accepted  others not.  
In this respect, in a broad sense, we can say that we return to Fisher's original idea of pure significance according to which
 ``\textit{Every experiment may be said to exist only in order to give the facts a chance of disproving the null hypothesis}'' (\cite{fisher25}). Notice that, since the Bayesian Discrepancy Test does not require any alternative hypothesis to be specified, the Jeffreys-Lindley paradox cannot arise unlike with the Bayes Factor (see \cite{lind57}).

The structure of the paper is as follows. In Section 2 the definition of the proposed index is presented for a scalar parameter of interest, both in the absence or presence of nuisance parameters. In Section 3 different illustrative examples are discussed, involving one or two independent populations. Finally, in Section 4 we make a comparison between the Bayesian Discrepancy Test and the Full Bayesian Significance Test which is based on the $e$-value, a well-known Bayesian evidence index used to test sharp hypotheses.  The last section contains conclusions and directions for further research. 
%---------------------------------------------------------------------------------------------------------------

%----------------------------------------------- Section 2 -----------------------------------------------------
\section{The Bayesian Discrepancy Measure}\label{sec2}
Let   $(\mathcal{X}, \mathcal{P}^{X}_{\boldsymbol{\theta}}, \boldsymbol{\Theta})$ be a parametric statistical model
where $X \in \mathcal{X} \subset \Real^k$,  $\mathcal{P}^{X}_{\boldsymbol{\theta}}=\{f(x \vert\boldsymbol{\theta})\ \vert \ \boldsymbol{\theta} \in \boldsymbol{\Theta}\}$ is a class of probability distributions (Lebesgue integrable)  defined on $\mathcal{X}$, depending  on an unknown  vector of continuous  parameters $\boldsymbol{\theta} \in  \boldsymbol{\Theta}$, an  open  subset of $\Real^p$. Assume that
\begin{itemize}
    \item [(a)] the  model  is identifiable; 
    
    \item [(b)] 
  $f(x \vert\boldsymbol{\theta})$ have support not depending on $\boldsymbol{\theta}$, $\forall \ \boldsymbol{\theta} \in \boldsymbol{\Theta}$;
    \item [(c)] the operations of integration and differentiation with respect to $\boldsymbol{\theta}$ can be exchanged. 
\end{itemize}

We assume  a prior probability density $g_0(\boldsymbol{\theta})$  following Cromwell's Rule which states that ``\textit{it is inadvisable to attach probabilites of zero to uncertain events, for if the prior probability is zero so is the posterior, whatever be the data. A probability of one is equally
dangerous because then the probability of the complementary event will be zero}'' (see Section 6.2 in \cite{lindley1991}). 

First we discuss the case of a scalar parameter.  Then we discuss the case of a scalar parameter of interest in the presence of nuisance parameters.

%%%%%% SUBSECTION 1 %%%%%%
\subsection{The Baye\-sian Discrepancy Measure for a scalar parameter}\label{subsec2_univariate}
In this section we assume that $k=p=1$.

Given an \textit{iid} random sample $\boldsymbol{x}=(x_1,\ldots,x_n)$ from $\mathcal{P}^{X}_{\theta}$, let $L(\theta\vert \boldsymbol{x})$ be the corresponding  likelihood function based on data $\boldsymbol x$ and let $g_0(\theta)$ be the prior distribution on $\Theta \subseteq \Real$. The posterior probability density for $\theta$ given $\boldsymbol x$ is then
$$
g_1(\theta \vert \boldsymbol x) \propto g_0(\theta) \, L(\theta\vert\boldsymbol x).
$$

\noindent
Moreover, given the posterior distribution function $G_1(\theta \vert \boldsymbol{x})$, the posterior median is any real number $m_1$ which satisfies the inequalities $G_1(m_1\vert \boldsymbol{x}) \geq \frac{1}{2}$ and $G_1^-(m_1 \vert \boldsymbol{x}) \leq \frac{1}{2}$, where $\displaystyle G_1^-(m_1 \vert \boldsymbol{x})=\lim_{\theta \uparrow m_1} G_1(\theta \vert \boldsymbol{x})$. In the case in which $G_1$ is continuous and strictly increasing we have $m_1= G_1^{-1}(\frac{1}{2} \vert \boldsymbol{x})$.

\noindent
We are interested in testing the precise hypothesis
\begin{equation}
\label{eq:1}
H: \theta = \theta_H.
\end{equation}

\noindent
In order to measure the discrepancy of the hypothesis (\ref{eq:1})  w.r.t.\  the posterior distribution, in the case $\Theta=\Real$, we consider the following two intervals:

\begin{enumerate}
\item the \textit{discrepancy interval}
\begin{eqnarray}
I_H = \left\{
\begin{array}{ll}
(m_1,\theta_H) & \text{\  \ if \ } \quad  m_1 < \theta_H \\
\{m_1\} & \text{\ \ if \ } \quad  m_1 = \theta_H, \\
(\theta_H,m_1) & \text{\ \ if\ \ } \quad  m_1 > \theta_H \\
\end{array}
\right.
\end{eqnarray}
\item
the  \textit{external  interval}
\begin{eqnarray}
I_E = \left\{
\begin{array}{ll}
(\theta_H,+\infty) & \text{\ \ if\ } \quad  m_1 < \theta_H \\
%\R & \text{if} \quad  m_1 = \theta_H, \\
(-\infty,\theta_H) & \text{\ \ if\ } \quad   \theta_H < m_1. \\
\end{array}
\right.
\end{eqnarray}
\end{enumerate}

When $m_1=\theta_H$, the external interval $I_E$ can be $(-\infty, m_1)$ or $(m_1,+\infty)$.
Note that, by construction, $\mathbb{P}(I_H \cup I_E)=\frac{1}{2}$ (see Figure \ref{fig1}).  If  the support of the posterior is a subset of $\Real$, the intervals $I_H$ and $I_E$ can be defined consequently. 

\begin{figure}
\begin{center}
\includegraphics[scale=0.8]{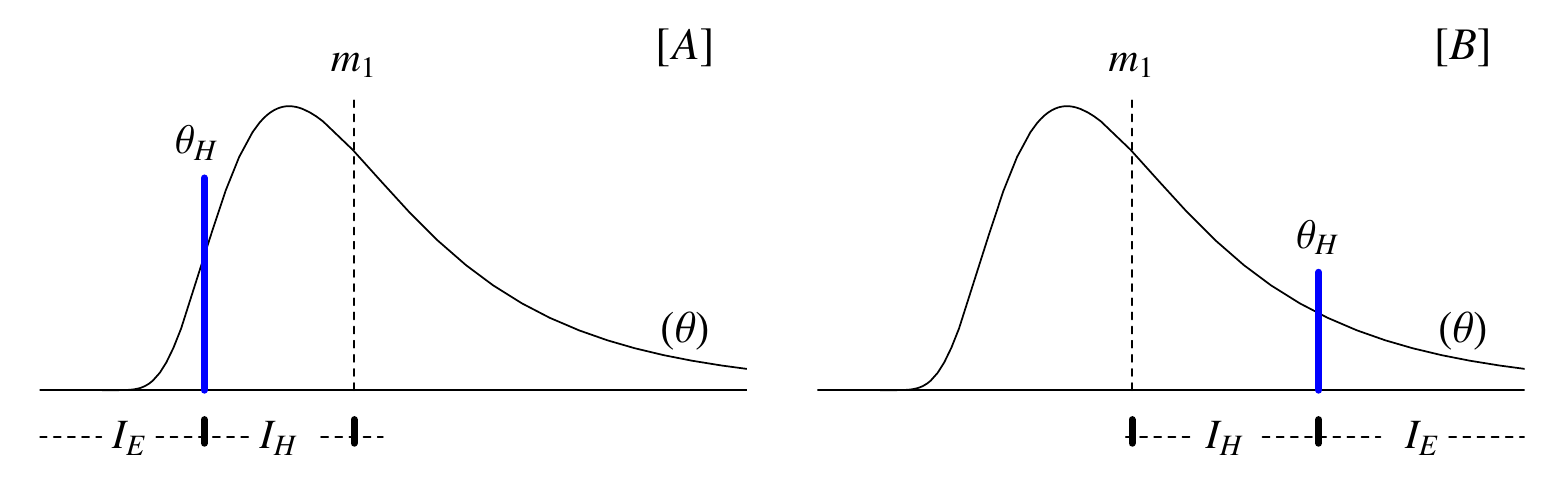}
\vspace{-0cm}
\caption{{\small Posterior density $g_1(\theta \vert \boldsymbol{x})$, the corresponding discrepancy interval $I_H$ and external interval  $I_E$ when $\theta_H < m_1$ ([A]) and $\theta_H >  m_1$ ([B]). }}
\label{fig1}
\end{center}
\end{figure}

\begin{definition}
Given the posterior distribution function 
$G_1(\theta \vert \boldsymbol{x})$,  we define the  Baye\-sian Discrepancy Measure of the hypothesis $H$  as
\begin{eqnarray}
\label{formula:4}
\delta_H = 2 \,  \mathbb{P}(\theta \in I_H \vert \boldsymbol{x}) = 2 \int_{I_H} d G_1(\theta \vert \boldsymbol{x}).
\end{eqnarray}
\end{definition}
\noindent
The measure can be also computed by means of the external interval as
\begin{eqnarray}
\delta_H = 1 - 2 \,  \mathbb{P}(\theta \in I_E  \vert \boldsymbol{x}) = 1 - 2 \int_{I_E} d G_1(\theta \vert \boldsymbol{x}),
\label{formula:est}
\end{eqnarray}
which can also be written as
%\begin{eqnarray}
%\delta_H = 1 - 2 \min\{G_1(\theta_H \vert \boldsymbol{x}),1-G_1^{-}(\theta_H \vert \boldsymbol{x})\}
%\label{formula:tail}
%\end{eqnarray}
%that, in the absolutely continuous case, reduces to
\begin{eqnarray}
\delta_H = 1 - 2 \min \{G_1(\theta_H \vert \boldsymbol{x}),1-G_1^-(\theta_H \vert \boldsymbol{x})\},
\label{formula:tail}
\end{eqnarray}
where $\displaystyle G_1^-(\theta_H \vert \boldsymbol{x})=\lim_{\theta \uparrow \theta_H} G_1(\theta \vert \boldsymbol{x})$.
In the absolutely continuous case, this simplifies to 
\begin{eqnarray}
\delta_H = 1 - 2 \min \{G_1(\theta_H \vert \boldsymbol{x}),1-G_1(\theta_H \vert \boldsymbol{x})\}.
\label{formula:tail1}
\end{eqnarray}
Formulations (\ref{formula:tail}) and (\ref{formula:tail1}) have the advantage of  not involving the  posterior median in the integral computation. 
Furthermore, one can interpret the quantity $\min \{G_1(\theta_H \vert \boldsymbol{x}),1-G_1(\theta_H \vert \boldsymbol{x})\}$ as the posterior probability of a ``tail" event concerning only the precise hypothesis $H$. Doubling this ``tail" probability, related to the precise hypothesis $H$, one gets a posterior probability assessment about how ``central" the hypothesis $H$ is, and hence how it is supported by the prior and the data.

It is important to highlight that the hypothesis $H$ induces the following partition
\begin{equation}
    \big\{ \Theta_a = (-\infty,  \theta_H),\ \Theta_H = \{\theta_H\},\ \Theta_b = (\theta_H , \infty)\big\}
\end{equation}
of the parameter space $\Theta$. Then formulations (\ref{formula:tail}) and (\ref{formula:tail1}) can be equivalently expressed as 
%---------------------------------------------------------------------- m.d.  DELTA
  \begin{equation}\label{misuraDELTA-03}
  \delta_H   = \; 1  - 2 \cdot    \min_{a,b}
  \displaystyle \big\{ {\mathbb{P}}(\theta \in \Theta_a \vert \boldsymbol{x}) \, , \, {\mathbb{P}}(\theta \in \Theta_b \vert \boldsymbol{x}) \big\}
  \, .
  \end{equation}
%-------------------------------------------------------------------------------------------------------------------------------------------------
The last formula can be naturally extended to the case where, besides the scalar parameter of interest, nuisance parameters are also present. This issue will be developed in Section \ref{subsec2_general}.

As pointed out before, the further $\theta_H$ is from the posterior median $m_1$ of the distribution function $G_1(\theta \vert \boldsymbol{x})$, the closer $\delta_H$ is to $1$. It can then be said that $H$ does \textit{not conform} to $G_1(\theta \vert \boldsymbol{x})$. On the contrary, the smaller $\delta_H$ the stronger is the evidence in favor of $H$. Following this idea, we can define a general testing procedure by choosing a certain threshold to establish how large the measure must be, before we can state that $H$ does not conform to the posterior distribution function.

\begin{definition}
The Bayesian Discrepancy Test (BDT) is the procedure based on the
Baye\-sian Discrepancy Measure (BDM) that rejects the hypothesis $H$ when $\delta_H$ is higher than some critical value $\omega \in \{0.95, 0.99, 0.995, 0.999, \dots\}$. 
\end{definition}

\noindent
As for all measures of evidence (Bayesian or frequentist), the chosen value for $\omega$ inevitably has a character of subjectivity. For a more detailed discussion on the threshold choice see the remark in Example \ref{primoESEMPIO}.

\begin{proposition}
\label{prop}
The following properties apply to the BDM, for a scalar parameter $\theta$: 
\begin{itemize}
\item[(i)]
$\delta_H$ always exists and, by construction, $\delta_H \in [0, 1)$;
\item[(ii)]
 $\delta_H$ is invariant under invertible monotonic transformations of the parameter $\theta$;
\item[(iii)]
  if $\theta^*$ is the true value of the parameter and $\theta_H=\theta^*$, then  $\delta_{H} \sim Unif(\cdot \vert 0,1)$,  for all sample sizes $n$; if  $\theta_H \neq \theta^*$,
 then $\delta_{H} \, {\mathop{\to}\limits^{\textit{p}}}\, 1$ (consistency property).  
\end{itemize}
\end{proposition}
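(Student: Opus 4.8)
The three items are handled separately, all working from the tail representations (\ref{formula:tail})--(\ref{formula:tail1}) rather than from the median-based definition (\ref{formula:4}).

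\emph{Part (i).} I would start from (\ref{formula:tail1}): in the absolutely continuous case $\delta_H = 1 - 2\min\{G_1(\theta_H\vert\boldsymbol x),\,1-G_1(\theta_H\vert\boldsymbol x)\}$. Existence is immediate, since a posterior distribution function always exists and is finite-valued. For the range, set $t=G_1(\theta_H\vert\boldsymbol x)\in[0,1]$ and note $\min\{t,1-t\}\in[0,\tfrac12]$, so $\delta_H\in[0,1]$; to make the upper bound strict I would invoke Cromwell's Rule, under which a prior positive throughout $\Theta$ yields a posterior density positive throughout $\Theta$, hence $0<G_1(\theta_H\vert\boldsymbol x)<1$ for $\theta_H$ interior to $\Theta$, so $\min\{t,1-t\}>0$ and $\delta_H<1$. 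For the general formula (\ref{formula:tail}) the same reasoning applies provided $\theta_H$ is not a posterior atom, which is the relevant situation here.

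\emph{Part (ii).} Let $\psi=h(\theta)$ with $h$ a continuous strictly monotone bijection of $\Theta$, and let the transformed hypothesis be $\psi=\psi_H$ with $\psi_H=h(\theta_H)$. The posterior of $\psi$ is the image of the posterior of $\theta$ under $h$. If $h$ is increasing, $\{\psi\le\psi_H\}=\{\theta\le\theta_H\}$, so the posterior distribution function of $\psi$ at $\psi_H$ equals $G_1(\theta_H\vert\boldsymbol x)$, and likewise for the left limits; substituting in (\ref{formula:tail}) returns the same value. If $h$ is decreasing the two arguments of the $\min$ in (\ref{formula:tail}) are interchanged, but that $\min$ is symmetric, so $\delta_H$ is again unchanged. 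I would add the remark that this is precisely why the construction is anchored to the posterior median, which is equivariant under monotone reparametrization, rather than to a non-equivariant summary such as the posterior mean.

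\emph{Part (iii), uniformity.} The tool I would use is the probability integral transform: a random variable with continuous distribution function $F$ is sent by $F$ to a $\mathrm{Unif}(0,1)$ variable. Applied conditionally on $\boldsymbol X=\boldsymbol x$, where $\theta$ carries the continuous posterior law $G_1(\cdot\vert\boldsymbol x)$, it gives $G_1(\theta\vert\boldsymbol x)\sim\mathrm{Unif}(0,1)$ for every $\boldsymbol x$; hence, under the joint law generated by the prior and the sampling model (draw $\theta^*\sim g_0$, then $\boldsymbol X\sim L(\theta^*\vert\cdot)$), the self-evaluated statistic $U:=G_1(\theta^*\vert\boldsymbol X)$ is $\mathrm{Unif}(0,1)$, and this holds verbatim for every $n$. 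Taking $\theta_H=\theta^*$ in (\ref{formula:tail1}) gives $\delta_H=1-2\min\{U,1-U\}$, and since $U\sim\mathrm{Unif}(0,1)$ forces $\min\{U,1-U\}\sim\mathrm{Unif}(0,\tfrac12)$, I obtain $\delta_H\sim\mathrm{Unif}(0,1)$. The fact that this is exact, not asymptotic, is what the clause ``for all sample sizes $n$'' records; under a strictly frequentist reading with $\theta^*$ held fixed and $\boldsymbol X\sim f(\cdot\vert\theta^*)$, the same exactness holds for location/scale-type models with the associated invariant priors and, more generally, only in the limit through the Bernstein--von Mises theorem.

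\emph{Part (iii), consistency.} Here $\theta_H\ne\theta^*$ and $\boldsymbol X\sim f(\cdot\vert\theta^*)$ with $n\to\infty$. The single substantive ingredient is posterior consistency at $\theta^*$, which I would obtain from the stated regularity (identifiability, $\theta$-free support, interchange of integration and differentiation) via a standard argument: for every open neighborhood $N\ni\theta^*$, $\mathbb{P}(\theta\in N\vert\boldsymbol X)\xrightarrow{\,p\,}1$. Choosing such an $N$ with $\theta_H\notin\overline N$, I get $G_1(\theta_H\vert\boldsymbol X)\le\mathbb{P}(\theta\notin N\vert\boldsymbol X)\xrightarrow{\,p\,}0$ when $\theta_H<\theta^*$, and $1-G_1^-(\theta_H\vert\boldsymbol X)\xrightarrow{\,p\,}0$ when $\theta_H>\theta^*$; in either case the minimum in (\ref{formula:tail}) tends to $0$ in probability, whence $\delta_H\xrightarrow{\,p\,}1$. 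I expect the main obstacle to be exactly this appeal to posterior consistency --- pinning down the minimal hypotheses and the precise reference under which it holds --- since every step downstream of it is elementary.
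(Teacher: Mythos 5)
Your proof is correct and rests on the same three pillars as the paper's (boundedness of the half-probability, equivariance of the median/CDF under monotone maps, the probability integral transform, and posterior consistency), but it is routed consistently through the tail representations (\ref{formula:tail})--(\ref{formula:tail1}) rather than the median-based integral (\ref{formula:4}), and this buys some extra precision at three points. For (i), the paper only observes that $\mathbb{P}(\theta\in I_H\vert\boldsymbol{x})\in[0,\tfrac12]$, which by itself gives $\delta_H\in[0,1]$; your explicit appeal to Cromwell's Rule to exclude $\delta_H=1$ supplies the step the paper leaves tacit. For (ii), the paper changes variables in the integral between $m_1$ and $\theta_H$ and inserts an absolute value for the decreasing case, whereas you note that a decreasing reparametrization merely swaps the two arguments of the symmetric $\min$ in (\ref{formula:tail}); the two arguments are equivalent, yours marginally cleaner. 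For the uniformity in (iii), the paper splits on the sign of $\theta_H-m_1$, writes $\delta_H=1-2G_1(\theta_H\vert\boldsymbol{x})$, and asserts $G_1(\theta_H\vert\boldsymbol{x})\sim \mathrm{Unif}(0,\tfrac12)$ ``using the integral transform,'' leaving implicit both the conditioning that produces the range $(0,\tfrac12)$ and, more importantly, the probability law under which a number like $G_1(\theta^*\vert\boldsymbol{x})$ is random at all; you avoid the case split by working with $\min\{U,1-U\}\sim\mathrm{Unif}(0,\tfrac12)$ directly and you name the joint prior-predictive law (draw $\theta^*\sim g_0$, then $\boldsymbol{X}\sim f(\cdot\vert\theta^*)$) under which the statement is exact for every $n$, adding the caveat that a fixed-$\theta^*$ frequentist reading is exact only for special model/prior pairs or asymptotically. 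That caveat is a genuine clarification the paper does not make. The consistency argument is the same in substance: the paper tracks convergence of the median and of the integral over $I_H$, you track convergence of the tail probability at $\theta_H$; both reduce to posterior concentration at $\theta^*$, which both you and the paper take from standard regularity conditions rather than proving.
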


\begin{proof}
\begin{itemize}
    \item[\textit{(i)}] The first property follows immediately from the fact that in (\ref{formula:4}) the posterior probability $\mathbb{P}(\theta \in I_H \vert \boldsymbol{x}) \in \Big[0,\frac{1}{2}\Big]$. 
    
    \item[\textit{(ii)}] Let $\lambda=\lambda(\theta)$ be an invertible monotonic transformation of the parameter $\theta$ and let $K_1$ be the cumulative distribution function of the parameter $\lambda$. We denote with $\lambda_H=\lambda(\theta_H)$ and we notice that $m'_1=\lambda(m_1)$ thanks to the monotonic invariance of the median. Suppose, for simplicity, that $\theta_H>m_1$. Then    
    \begin{equation*}
        \delta_H=2\ \int_{m_1}^{\theta_H} dG_1(\theta \vert \boldsymbol{x})\ = 2\ \Big\vert \int_{m'_1}^{\lambda_H} dK_1(\lambda \vert \boldsymbol{x})\Big\vert.
    \end{equation*}
    Therefore, the invariance of the BDM follows immediately from the invariance of the median under invertible monotonic transformations.
    Notice  that if instead of the median $m_1$ we consider, for example, the posterior mean  $E(\theta \vert  \boldsymbol{x} )$, which is not invariant under invertible monotonic reparametrizations, the property will not hold in general. Moreover, $E(\theta \vert  \boldsymbol{x} )$ for some models may not even exist.
    
    \item[\textit{(iii)}] Let us examine the first part of the statement for which $\theta_H=\theta^*$. Suppose that  $\theta_H<m_1$. The BDM is defined as
    \begin{align}
    \begin{split}
      \delta_H &= 2 \int_{\theta_H}^{m_1} dG_1(\theta \vert \boldsymbol{x}) \\
      &=1 - 2 \int_{-\infty}^{\theta_H} dG_1(\theta \vert \boldsymbol{x}) \\
      &= 1-2\ G_1(\theta_H \vert \boldsymbol{x}).
    \end{split}
    \label{B}
    \end{align}
    Using the integral transform and the fact that we have supposed $\theta_H<m_1$, we easly find that
    \begin{equation*}
        G_1(\theta_H \vert \boldsymbol{x}) = \int_{-\infty}^{\theta_H} {\rm d}G_1(\theta \vert \boldsymbol{x})  = W \sim Unif\big(\cdot\big \vert 0,\textstyle{\frac{1}{2}}\big).
    \end{equation*}
    Then, since  $\delta_H = 1 - 2 W$,  we find  $\delta_H\sim Unif(\cdot \vert 0,1).$  A similar proof holds for  $\theta_H>m_1$.
     If, instead, $\theta_H \neq \theta^*$ and  $n \rightarrow \infty$, under suitable regularity conditions (see for instance Section 5.3.2, p. 287 in \cite{bersmith94}) it is  well known that $g_1(\theta \vert \boldsymbol{x})$ is concentrated around $\theta^*$. In particular, the posterior median $m_1$ converges in probability to $\theta^*$. Again, suppose for instance that $\theta_H < \theta^*$, then $\lim_{n \rightarrow \infty} \delta_H= 2 \lim_{n \rightarrow \infty} \int_{\theta_H}^{m_1} dG_1(\theta \vert \boldsymbol{x})= 2 \cdot \frac{1}{2}= 1.$
\end{itemize}
\end{proof}

%%%%%% SUBSECTION 2 %%%%%%
\subsection{The Baye\-sian Discrepancy Measure in presence of nuisance parameters}\label{subsec2_general}
\label{multiparametric} 
Suppose that $p \geq 2$ and $k \geq 1$. Let $\varphi= \varphi(\boldsymbol{\theta})$ be a scalar parameter of interest, where $\varphi: \boldsymbol{\Theta} \to \Phi \subseteq \Real$. Let us further consider a bijective reparametrization $\boldsymbol{\theta} \Leftrightarrow (\varphi, \boldsymbol{\zeta})$, where $\boldsymbol{\zeta} \in \boldsymbol{Z} \subseteq \Real^{p-1}$ denotes an arbitrary nuisance parameter, which is determined on the basis of analytical convenience (note that the value of the evidence measure is invariant with respect to the choice of the nuisance parameter).   
We consider hypotheses that can be expressed in the form
\begin{equation}\label{ipotesiMULT}
 H: \varphi = \varphi_H,
\end{equation}
where $\varphi_H$ is known as it represents the hypothesis that it is of interest to evaluate. The transformation $\varphi$ must be such that, for all $\boldsymbol{\theta} \in \boldsymbol{\Theta}$ and for all $\varphi_H \in \Phi$, it can always be assessed whether $\varphi$ is strictly smaller, strictly larger or equal to $\varphi_H$ (i.e. $\varphi < \varphi_H$ either $\varphi > \varphi_H$, or $\varphi = \varphi_H$). Hypothesis \eqref{ipotesiMULT} and transformation $\varphi$ univocally identify the partition $\big\{ \boldsymbol{\Theta}_a, \, \boldsymbol{\Theta}_H, \, \boldsymbol{\Theta}_b \big\}$ of the parameter space $\boldsymbol{\Theta}$, with
\begin{equation}\label{partition}
    \begin{array}{ll}
        {\boldsymbol{\Theta}}_a &= \big\{ \boldsymbol{\theta} \in \boldsymbol{\Theta} : \varphi < \varphi_H \big\}\\
        {\boldsymbol{\Theta}}_H &= \big\{ \boldsymbol{\theta} \in \boldsymbol{\Theta} : \varphi = \varphi_H \big\}.\\
        {\boldsymbol{\Theta}}_b &= \big\{ \boldsymbol{\theta} \in \boldsymbol{\Theta} : \varphi > \varphi_H \big\}
    \end{array}
\end{equation}
We call any hypothesis of type \eqref{ipotesiMULT}, which identify a partition of the form \eqref{partition}, a \textit{partitioning hypothesis}.  It is easy to verify that many commonly used hypotheses are partitioning. In this paper we only consider hypotheses of this nature. In this setting, we express the BDM as 
 \begin{equation}\label{misuraDELTA-04}
   \begin{array}{ll} \vspace{4 pt}
   \delta_H   &=\; \displaystyle 1  -    2 \cdot \min_{a,b}
   \big\{ {\mathbb{P}}(\boldsymbol{\theta} \in \boldsymbol{\Theta}_a \vert \boldsymbol{x}) \, , \,
         {\mathbb{P}}(\boldsymbol{\theta} \in \boldsymbol{\Theta}_b \vert \boldsymbol{x}) \big\}     \\ \vspace{3 pt}
   &=\; \displaystyle   1  -    2 \cdot \int_{I_E} g_1(\boldsymbol{\theta} \vert \boldsymbol{x}) \, {\rm d}\boldsymbol{\theta},
   \,
  % . \; \lhd
 \end{array}
 \end{equation}
where  \textit{the external set}  is given by

\begin{equation}\label{insiemeEST}
  I_E \;  = \;
  \arg \min_{a,b}
  \big\{ {\mathbb{P}}(\boldsymbol{\theta} \in \boldsymbol{\Theta}_a \vert \boldsymbol{x}) \, , \,
         {\mathbb{P}}(\boldsymbol{\theta} \in \boldsymbol{\Theta}_b \vert  \boldsymbol{x}) \big\}
         \;   .  \;
         % \lhd
  \end{equation}
\noindent
In the particular scenario where the marginal posterior  
\begin{equation*}
 h_1(\varphi \vert \boldsymbol{x}) \; = \;
 \int_{\varphi( \boldsymbol{\theta})= \varphi}
 g_1(\boldsymbol{\theta} \vert \boldsymbol{x})  {\rm d}\boldsymbol{\theta} \; , \quad
 \forall \varphi \in \Phi \,  ,
\end{equation*}
of the parameter of interest $\varphi$ can be computed in a closed form, the hypothesis (\ref{ipotesiMULT}) can be easily treated using the methodologies seen in Subsection \textbf{\ref{subsec2_univariate}}, i.e. the BDM is computed by means of formula $(\ref{formula:4})$ or $(\ref{formula:est})$ applied to the marginal.

Properties reported in Proposition \ref{prop} naturally extend to the setting we just presented.  

%-------------------------------------------------------------------------------------------------------------------------

%---------------------------------------- Section 3 ----------------------------------------------------------------------
\section{Illustrative examples}\label{sec4}
The simplicity of the  BDT is highlighted by the following examples, some of which deal with cases not usually considered in the literature. Examples \ref{primoESEMPIO} and \ref{secondoESEMPIO} focus on a scalar parameter of interest, while Examples \ref{modelloGAMMA}, \ref{cv}, \ref{Mara}, \ref{bisanzioESEMPIO}, \ref{2Gamma} also contain nuisance parameters.

In all examples we have adopted a Jeffrey's prior (see \cite{yang1996} for a catalog of non-informative priors) for simplicity. However, other objective priors and, in the presence of substantive prior information, informative priors could equally be used.

\subsection{Examples of the univariate parameter case}

\begin{example}\rm{\textit{Exponential distribution\\}}
\label{primoESEMPIO}
Let  $\boldsymbol{x}=(x_1, \dots, x_n)$  be an \textit{iid} sample  of size $n$ from the exponential distribution $X \sim Exp\big(x \vert \theta^{-1} \big)$, with  $\theta \in {\Real}^+.$ We are interested in the hypothesis $H: \theta=\theta_H$.  Assuming a Jeffreys' prior for  $\theta$, i.e.  $g_0(\theta) \propto \theta^{-1}$,  the posterior distribution is  given by $g_1(\theta \vert  \boldsymbol{x}) \propto \theta^{-n-1} \exp \{- n \bar{x} \cdot \theta^{-1} \}$, with $\bar{x}$  the sample mean.

Figure \ref{fig2} shows the posterior density function as well as the  discrepancy and the  external intervals for $H:\theta = \theta_H = 2.4$ and the MLE $\bar{x} = 1.2$ for three sample sizes
   $[A]$ $n =6$, $[B]$ $n = 12$, $[C]$ $n = 24$.  In $[A]$ we have a posterior median $m_1=1.27$ and  $\delta_H =0.832$,  while in $[B]$ $m_1=1.23$ and $\delta_H =0.960$, in $[C]$ $m_1=1.22$ and $\delta_H =0.997$. In case $[A]$ we do not reject  $H$ while in $[B]$ and in $[C]$ we are led to reject the hypothesis.  %Inferential conclusions depend on the sample size. }

   \begin{figure}
    \centering
    \includegraphics[scale=0.8]{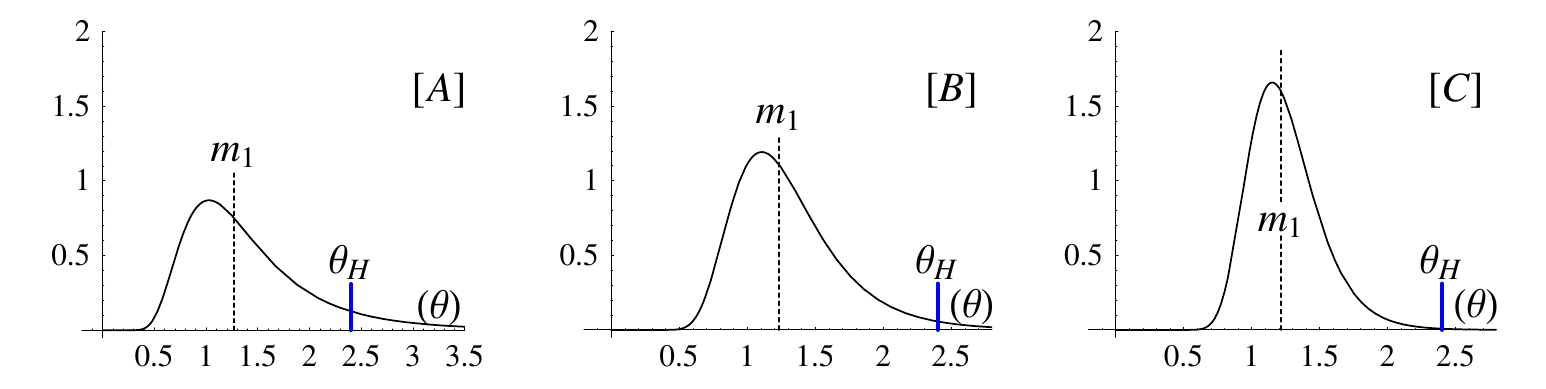}
    \caption{{\small Posterior density function  $g_1(\theta \vert  n\bar{x})$ and
 intervals  $I_H = (m_1,\theta_H)$ and $I_E = (\theta_H, \infty)$, using data from Example \ref{primoESEMPIO}. }}
 \label{fig2}
\end{figure}

Note that in all scenarios considered, we find the following relation between  $\delta_H$  and the \textit{p-value},
\begin{equation}
 \label{pvalue}
 \mbox{\textit{p-value}}=1 - \delta_H
 \end{equation}
 (in $[A]$ $\delta_H = 0.832$ and \textit{p-value}$= 0.168$, in $[B]$ $\delta_H = 0.96$ and  \textit{p-value}$= 0.04$, while in $[C]$ $\delta_H = 0.997$ and  \textit{p-value}$= 0.003$). This result  depends clearly on the use of the Jeffreys' prior, which is a matching prior for a scalar parameter (see \cite{ruli2021can}).
\begin{remark}
The fact that classical and Bayesian procedures, under certain conditions, produce the same conclusions is well known (see, for instance, \cite{lind65}). The linear relationship (\ref{pvalue}) also occurs in other simple cases. Even if it does not hold for more complicated models and in general for proper priors, it suggests a relationship between the traditional \textit{p-value} levels of significance $\{0.05, 0.01, 0.005, \dots \}$, and the critical values for the discrepancy measure $\{0.95, 0.99, 0.995, \dots\}$. In this paper we will not investigate the problem of the choice of the BDM threshold $\omega$. Several aspects about the choice of \textit{p-values} thresholds have been considered in \cite{benjamin2018} and can be suitably extended to the BDM. 
\end{remark}

Finally, to conclude Example \ref{primoESEMPIO}, it is useful to show the trend of the BDM when varying $n=1,2,\dots,25$ for six values of the MLE:  $(a)\ 0.8$, $(b)\ 1.2$, $(c)\ 1.6$ (case [A]) and $(d)\ 4.0$, $(e)\ 3.6$, $(f)\ 3.2$ (case [B]), see Figure \ref{fig:MLE}. In order to explain the difference between the BDM trends in cases [A] and [B], consider that:

\begin{enumerate}
    \item [(i)] in case [A] the posterior median $m_1 < \theta_H = 2.4$, whereas in case [B] $m_1 > \theta_H = 2.4$;
    \item [(ii)] $\delta_H$ is monotonically increasing, both with respect to $n$, and with respect to the distance $\vert m_1 - \theta_H \vert$;
    \item [(iii)] the posterior $g_1$ always has a positive asymmetry, which decreases as $n$ increases;
    \item [(iv)] the trend difference of the BDM in cases [A] and [B] depends on the fact that the posterior $g_1$ has `small' tails on the left-hand side of $m_1$ and `large' tails on the right-hand side.
\end{enumerate}

\begin{figure}
    \centering
    \includegraphics[scale=0.8]{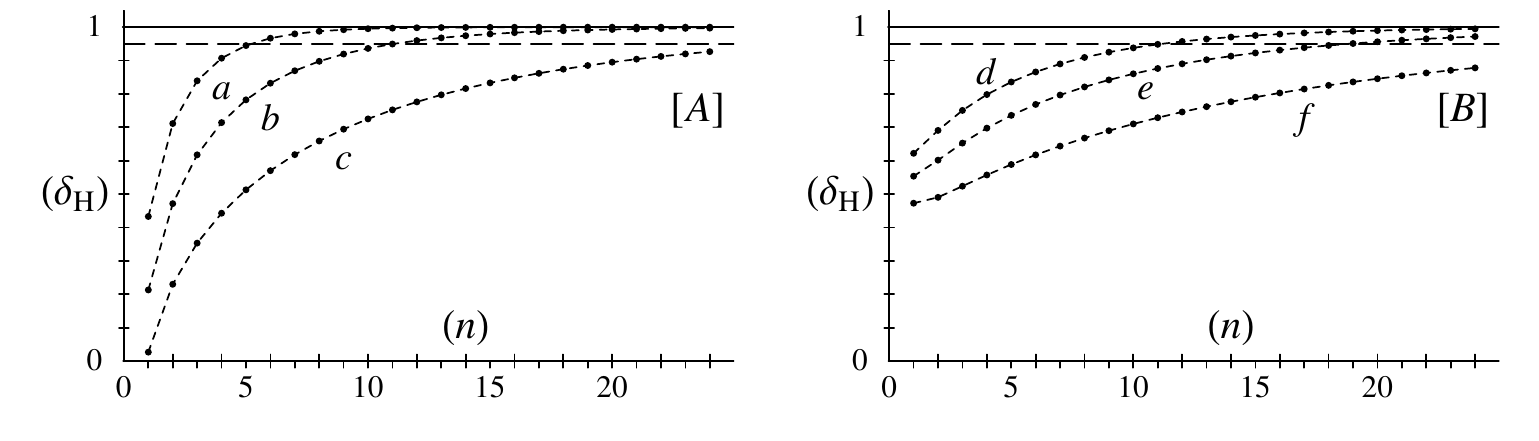}
    \caption{BDM for $n$ increasing and for different values of the MLE. Case $[A]$ with MLE $= 0.8\ (a),\ 1.2\ (b),\ 1.6\ (c)$ and case $[B]$ with MLE $= 3.2\ (f),\ 3.6\ (e),\ 4\ (d)$.}
    \label{fig:MLE}
\end{figure}
\end{example}

Moving forward in the discussion, in order to highlight the evaluative nature of the BDT, it is worth pointing out that it allows the separate and simultaneous testing of $\ell \geq 2$ hypotheses
\begin{equation}
 H_j: \, \varphi = \varphi_j,   \quad j=1,2,\dots , \ell\;,
 \end{equation}
%leading to a comparison between them, according to the values of the $\delta_{H_j}$, 
as shown in Example \ref{secondoESEMPIO}. Remember that with the comparative approach, among the $\ell$ competing hypotheses, only one is accepted. On the contrary, under the evaluative approach, it may happen that several hypotheses are supported by the data, or even that all hypotheses must be rejected.

\begin{example}\rm{- \textit{Evaluation of some hypotheses made by several experts (Bernoulli distribution)\\}}
\label{secondoESEMPIO}  
In the 1700s, several hypotheses $H_j : \theta = \theta_j$ were formulated about the birth masculinity rate
  $\theta =\frac{M}{M+F}$. Among them we consider  $\theta_1 = \frac{1}{2}$ (J. Bernoulli),
   $\theta_2 = \frac{13}{25}$ (J. Arbuthnot),
   $\theta_3 = \frac{1050}{2050}$ (J. P. S\"{u}ssmilch),
   $\theta_4 = \frac{23}{45}$ (P. S. Laplace).
 We assume that the gender of each newborn  is modeled as a $Bin(\cdot \vert 1, \theta)$.   Then, using data recorded in  1710 in London  (see, for instance, \cite{spiegelh2019}), with $7640$ males and  $7288$ females  (the MLE is $\hat{\theta} = 0.512$)  and  assuming the Jeffreys' prior $Beta(\theta \vert 1, 1)$, we compute  $\delta_{H_j}$  using the Normal  asymptotic approximation
\begin{equation*}
    \delta_{H_j} \cong1 - 2\cdot \displaystyle \int_{I_E^j} \Tilde{g}_1\big(\theta \vert \hat{\theta}, \textstyle{\frac{1}{n}}\hat{\theta}(1-\hat{\theta}) \big) {\rm d}\theta, \quad  j=1,2,3,4,
\end{equation*}
 with $\Tilde{g}_1$ the Normal distribution. Since $\delta_{H_1}=0.996$, $\delta_{H_2}=0.955$, $\delta_{H_3}=0.079$, $\delta_{H_4}=0.132$, we can conclude that the first two hypotheses has to be rejected, while there is not enough evidence to reject the hypotheses made by S\"{u}ssmilch and Laplace. 
\end{example}

\subsection{Examples of the more general case}
The examples presented hereafter, can be distinguished by tests concerning a parameter or a parametric function of a single population, and tests concerning  the comparison of two independent population parameters. 

\subsubsection{Tests  involving a single population}

\begin{example}\rm{- \textit{Test on the shape parameter, mean and variance of the Gamma distribution\\}}
\label{modelloGAMMA}
Let ${\boldsymbol{x}} = (x_1,\dots,x_n)$ be an \textit{iid}  sample of size $n$  from   $X \sim Gamma\big( x \vert \alpha , \beta \big)$,  $(\alpha , \beta) \in {\Real}^+ \times {\Real}^+$.  We denote by $m_g$  the geometric mean of $\boldsymbol{x}$.  The  likelihood function for $(\alpha,\beta)$ is given by
\begin{equation*}L(\alpha , \beta \vert \boldsymbol{x}) \; \propto \;
 \left(\displaystyle {\frac{\beta^\alpha}{\Gamma(\alpha)}} \cdot m_g^\alpha \cdot e^{- \bar{x}\cdot \beta} \right)^n \,.
\end{equation*}
For the fictitious data ${\boldsymbol{x}} = (0.8, 1.1, 1.2, 1.4, 1.8, 2, 4, 5, 8)$, we find that the  MLEs are $\hat{\alpha} = 1.921$  and $\hat{\beta} = 0.7572$. 
 
We are interested in testing the   hypotheses
$[A]$  $H_A: \alpha = \alpha_H$, with $\alpha_H = 2.5$,
$[B]$  $H_B: \mu = \mu_H$, with $\mu_H = 6$,  and
$[C]$  $H_C: \sigma^2 = \sigma^2_H$, with $\sigma^2_H = 2$, where $\mu = \displaystyle{\frac{\alpha}{\beta}}$ and $\sigma^2 = \displaystyle{\frac{\alpha}{\beta^2}}$ denote the mean and the variance of $X$. 

Adopting
the  Jeffreys' prior for $(\alpha , \beta)$,  i.e.
\begin{equation*}
g_0 ( \alpha , \beta) \; = \;
 g_0^\alpha ( \alpha ) \cdot g_0^\beta ( \beta ) \, \propto \;
 \displaystyle \sqrt{\alpha\cdot \psi^{(1)} (\alpha) - 1} \cdot {\frac{1}{\beta}} \, ,
\end{equation*}
 where $\psi^{(1)} (\alpha) = \sum_{j=0}^\infty (\alpha+ j)^{-2}$ denotes the \textit{digamma} function,
the posterior  for $(\alpha, \beta)$ is given by $g_1(\alpha , \beta \mid {\boldsymbol{x}}) \; = \; k\cdot g_0^\alpha ( \alpha ) \cdot g_0^\beta ( \beta ) \cdot L(\alpha , \beta \vert \boldsymbol{x}),$ with normalizing constant $k$. 
   \vspace{2 pt}

\noindent
 %$\blacktriangleright$
 \begin{itemize}
 \small
 \item  Case  $[A]$ \\
The hypothesis $H_A$ identifies the vertical straight line of equation $\alpha = \alpha_H$ and two subsets
${\boldsymbol{\Theta}}_a = \{ (\alpha, \beta) : \alpha < \alpha_H \}$
and
${\boldsymbol{\Theta}}_b = \{ (\alpha, \beta) : \alpha > \alpha_H \}$ (see Figure \ref{figuraGAMMA} [A]).
 Then we can compute
 \begin{eqnarray*}
 {\mathbb{P}} \big(  (\alpha, \beta) \in {\boldsymbol{\Theta}}_b\ \vert\ \boldsymbol{x}\big)&=&  \displaystyle
     \int_{\alpha_H}^\infty \int_0^\infty
   g_1(\alpha , \beta \mid {\boldsymbol{x}}) \,  {\rm d} \beta   \, {\rm d}\alpha  \\
   &=& \displaystyle
   k \cdot   \int_{\alpha_H}^\infty \int_0^\infty
   \sqrt{\alpha\cdot \psi^{(1)} (\alpha) - 1} \cdot {\frac{1}{\beta}}
   \left(\displaystyle {\frac{\beta^\alpha}{\Gamma(\alpha)}} \cdot m_g^\alpha \cdot e^{- \bar{x}\cdot \beta} \right)^n
   \,  {\rm d} \beta   \, {\rm d}\alpha  \\
   &=& \displaystyle \displaystyle
   k \cdot  \int_{\alpha_H}^\infty
   \sqrt{\alpha\cdot \psi^{(1)} (\alpha) - 1} \cdot  {\frac{\Gamma(n\alpha)}{\Gamma(\alpha)^n}}  \cdot \left({\frac{m_g}{n\, \bar{x}}} \right)^{n\alpha}
   \, {\rm d} \alpha \; = \; 0.215 \, ,
 \end{eqnarray*}
and  $\delta_H = 0.570$, a value that does not allows for the rejection of $H_A$.\bigskip

\item Case $[B]$ \\
The hypothesis $H_B$ identifies the straight line of equation $\beta = {\frac{1}{\mu_{H}}} \alpha$ in the $\alpha\beta$-plane (see Figure \ref{figuraGAMMA}  [\textit{B}])
and the two subsets

\begin{center}
${\boldsymbol{\Theta}}_c = \big\{ (\alpha, \beta) : \beta > {\frac{1}{\mu_{H}}}  \alpha  \big\}$
\, and \,
${\boldsymbol{\Theta}}_d= \big\{ (\alpha, \beta) : \beta < {\frac{1}{\mu_{H}}} \alpha  \big\}$.
\end{center}

We have

\begin{equation*}
  {\mathbb{P}} \big(  (\alpha,\beta ) \in {\boldsymbol{\Theta}}_d\ \vert \ \boldsymbol{x} \big)    \; = \;  \displaystyle
   \int_{{\boldsymbol{\Theta}}_d}
   g_1(\alpha , \beta \mid {\boldsymbol{x}})   \, {\rm d}\alpha\,  {\rm d} \beta
   \; = \;
    0.012     \,  ,
\end{equation*}
 and, since $\delta_H = 0.976$, we  reject $H_B$.\bigskip

 \item Case $[C]$ \\
The hypothesis $H_C$ identifies the parabola of equation $\beta = {\frac{1}{\sqrt{\sigma^2_{H}}}}  \sqrt{\alpha},$ in the $\alpha\beta$-plane  (see Figure \ref{figuraGAMMA}  [\textit{C}]), and the two subsets
 \begin{center}
 ${\boldsymbol{\Theta}}_e = \big\{ (\alpha,\beta) : \beta > {\frac{1}{\sqrt{\sigma^2_{H}}}} \sqrt{\alpha} \big\}$
\, and \,
 ${\boldsymbol{\Theta}}_f= \big\{ (\alpha, \beta) : \beta < {\frac{1}{\sqrt{\sigma^2_{H}}}} \sqrt{\alpha} \big\}$.
\end{center}

We have
\begin{equation*}
  {\mathbb{P}} \big(  (\alpha, \beta) \in {\boldsymbol{\Theta}}_e \ \vert \ \boldsymbol{x} \big)    \; = \;  \displaystyle
   \int_{{\boldsymbol{\Theta}}_e}
   g_1(\alpha , \beta \mid {\boldsymbol{x}})   \, {\rm d}\alpha\,  {\rm d} \beta
   \; = \;
    0.078     \,  .
\end{equation*}
Therefore $\delta_H = 0.846$, and so we do not reject $H_C$.

\end{itemize}

 \begin{figure}
    \centering
    \includegraphics[scale=0.8]{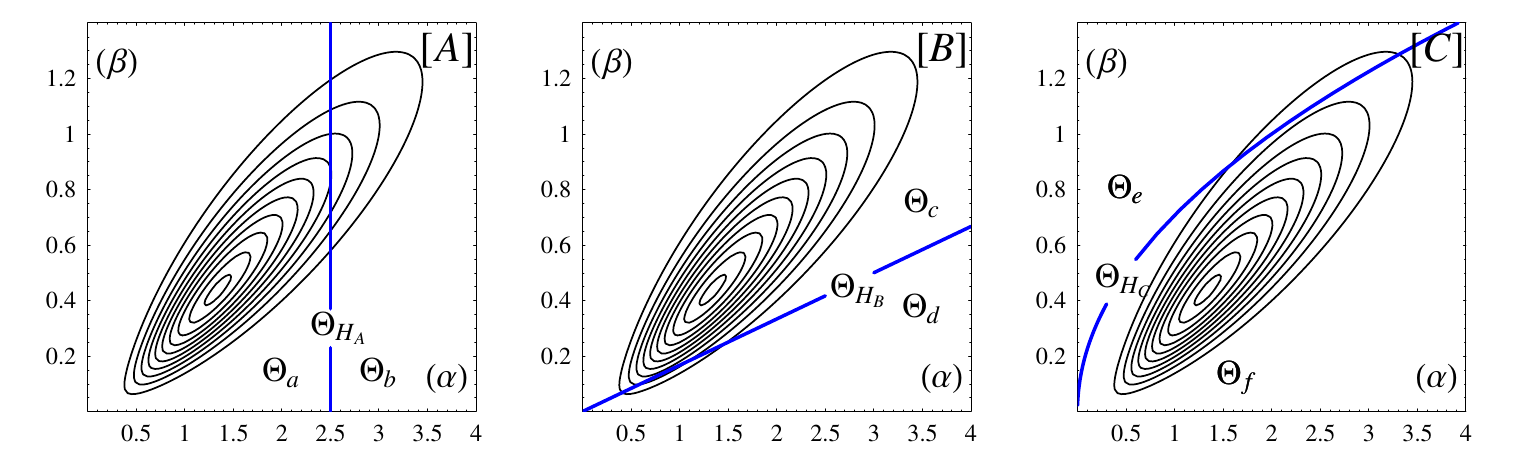}
    \caption{{\small Posterior density function $g_1(\alpha , \beta  \vert  {\boldsymbol{x}})$  from Example  \ref{modelloGAMMA} and corresponding sets of the induced partition in the cases $[A]$, $[B]$ and $[C]$.} 
    \label{figuraGAMMA}}
\end{figure}
%ev=0.622,0.711,0.835
\end{example}

\begin{example}\rm{- \textit{Test on the  coefficient of variation for a Normal distribution\\}}
\label{cv}
Given an  \textit{iid} sample  ${\boldsymbol{x}} = (x_1,\dots,x_n)$  from   $X \sim N \big( x \vert \mu , \phi^{-1} \big)$, the parameter of interest is $\psi = \displaystyle{\frac{\sqrt{Var(X)}}{\mid \E(X) \mid}} =\displaystyle{\frac{1}{\mid \mu \mid \sqrt{\phi}} }$. We are interested in testing  the hypothesis
\begin{equation*}
H: \psi = \psi_H,
\end{equation*}
with $\psi_H = 0.1$.
If we consider the Jeffreys' prior
$g_0(\mu,\phi) \propto  \phi^{-1} \cdot {\mathbf{1}}_{\Real \times \Real^+},$ the posterior  distribution is the Normal-Gamma density
 $$
 (\mu,\phi) \mid \boldsymbol{x}
  \; \sim  \;
  NG \big( \mu , \phi \mid \eta, \nu , \alpha, \beta \big) \, ,\,
 $$
with hyperparameters
 $( \eta, \nu, \alpha , \beta)$, where  $\eta = \bar{x}$,  $\nu = n$,  $\alpha = {\frac{1}{2}}(n-1)$, $\beta = {\frac{1}{2}}n s^2$,
 and density
$$g_1( \mu, \phi \mid \eta, \nu, \alpha , \beta)=\frac{\beta^{\alpha}\sqrt{\nu} }{\Gamma(\alpha)\sqrt{2 \pi}}  \phi^{\alpha-1/2} e^{-\frac{\nu \phi}{2} (\mu- \eta)^2}e^{-\beta\phi}.$$
We consider  the particular  case in which $\bar{x} = 17$ and  $s^2 = 1.6$ (so that  the MLE is $\hat{\phi} = 0.074$) with two samples of  size  $n= 10$ (Figure \ref{figura-coeffVAR} $[A]$) and $n= 40$ (Figure \ref{figura-coeffVAR} $[B]$).
In the $\mu\phi-$space, the hypothesis $H$ is  represented by the curve $\phi = \displaystyle{\frac{1}{\psi_H^2 }\mu^{-2}}$ and  determines the subsets  ${\boldsymbol{\Theta}}_a$ and  ${\boldsymbol{\Theta}}_b$ visualized in  Figure \ref{figura-coeffVAR}.

\begin{figure}
    \centering
    \includegraphics[scale=0.8]{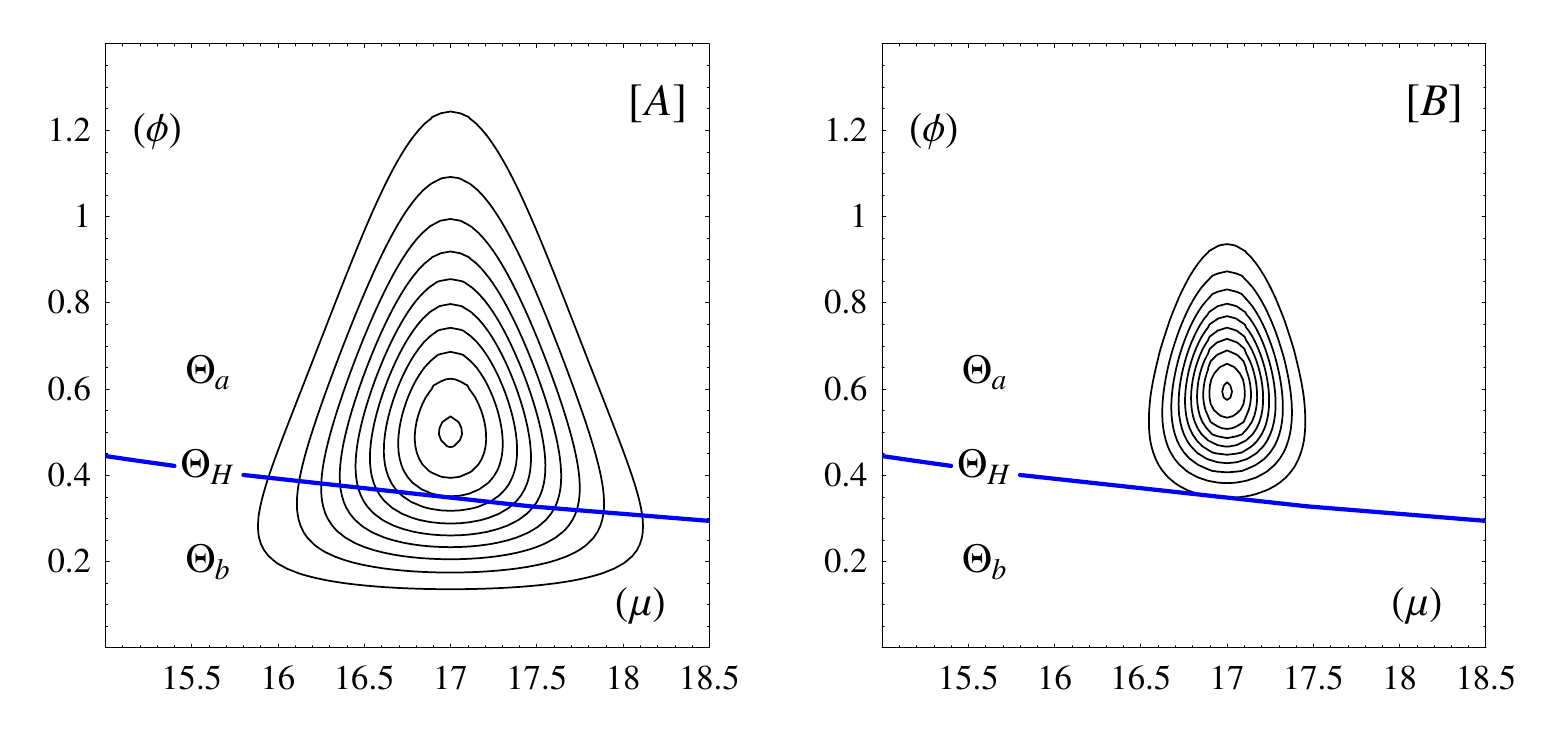}
    \caption{{\small Test on the coefficient of  variation  $\psi$ of a Gaussian population.  Data refers to Example \ref{cv}. In the plots, the sets  ${\boldsymbol{\Theta}}_a$, ${\boldsymbol{\Theta}}_b $  and  ${\boldsymbol{\Theta}}_H$  are reported for $n=10$ ($[A]$) and $n=40$ ($[B]$).}}
    \label{figura-coeffVAR}
\end{figure}

In case $[A]$ we have
\begin{equation*}
  {\mathbb{P}} \big(  (\mu, \phi) \in {\boldsymbol{\Theta}}_b\ \vert \ \boldsymbol{x} \big)=
   \int_{{\boldsymbol{\Theta}}_b}
  % NG \big( \mu , \phi   \mid  \eta_1, \nu_1 , \alpha_1 , \beta_1 \big)   \, {\rm d}\mu \,  {\rm d} \phi      = 0.208
   g_1( \mu, \phi \mid \eta, \nu, \alpha , \beta) \, {\rm d}\mu \,  {\rm d} \phi      = 0.215,
 \end{equation*}
 where $g_1( \mu, \phi \mid \eta, \nu, \alpha , \beta)$ is the Normal-Gamma density,
so that   $\delta_H = 0.570$ and we do not reject $H$. In case $[B]$,  we have  ${\mathbb{P}} \big(  (\mu, \phi) \in {\boldsymbol{\Theta}}_b \ \vert \ \boldsymbol{x} \big) = 0.014$ and,  since $\delta_H = 0.972$, we reject $H$. Therefore in such a case, with different sample sizes, the inferential conclusions change. 
% Bayesian e-value against H_0: 0.7197916  
% Bayesian e-value against H_0: 0.9813181
\end{example}

\begin{example}\rm{- \textit{Test on the skewness coefficient of the Inverse Gaussian distribution\\}}
\label{Mara}
Let us consider a Inverse Gaussian random variable $X$ with density 
\begin{equation*}
f(x \mid \mu, \nu)=\sqrt{\frac{\nu}{2\pi x^3}} \exp{\Big\{ -\frac{\nu}{2} \Big(\frac{x-\mu}{\mu \sqrt{x}}\Big)^2\Big\}} \cdot \textbf{1}_{\Real^+}(x),
\end{equation*}
where $(\mu,\nu) \in \Real^+ \times \Real^+$. The parameter of interest is the skewness coefficient $\gamma = 3 \sqrt{\frac{\mu}{\nu}}$ and it is of interest to test the hypothesis $H:  \gamma=\gamma_H$, where $\gamma_H = 2$. The Jeffreys' prior is 
\begin{equation*}
g_0(\mu,\nu) \propto  \frac{1}{\sqrt{\mu^3 \nu}} \cdot \textbf{1}_{\Real^+\times\Real^+}(\mu,\nu). \end{equation*}
Given $n$ observations,
the posterior distribution of $(\mu, \nu)$ is
\begin{equation*}
g_1(\mu, \nu \vert \boldsymbol{x})     \; \propto \;
\sqrt{\frac{\nu^{n-1}}{\mu^3}} \cdot
\exp{
\left\{-   \frac{n \ \nu}{2}\cdot 
 \left( \frac{\bar{x}}{\mu^2} - \frac{2}{\mu} + \frac{1}{a} \right)  \right\}
 }
 \cdot \textbf{1}_{\mathbb{R}^+\times\mathbb{R}^+}(\mu,\nu),
\end{equation*}
\noindent
where $\bar{x}$  and $a$ are the arithmetic and harmonic mean, respectively.

We apply the procedure to the following precipitation data (inches) from Jug Bridge, Maryland, analyzed in \cite{Folks:1978} (p. 272): 

\begin{center}
\begin{tabular}{ccccccc}
1.01 & 1.11 & 1.13 & 1.15 & 1.16\\ 
1.17 & 1.17 & 1.20 & 1.52 & 1.54\\ 
1.54 & 1.57 & 1.64 & 1.73 & 1.79\\ 
2.09 & 2.09 & 2.57 & 2.75 & 2.93\\
3.19 & 3.54 & 3.57 & 5.11 & 5.62.
\end{tabular}
\end{center}
The hypothesis
identifies in the parameter space 
$\boldsymbol{\Theta} =  \mathbb{R}^+ \times \mathbb{R}^+$ the
subsets
 \begin{displaymath}
 \begin{array}{ll}
 \boldsymbol{\Theta}_a 
 \;=&
 \Big\{(\mu,\nu)\in \boldsymbol{\Theta} :
3\sqrt{\frac{\mu}{\nu}}   <   \gamma_H \Big\}         \, ,        \\ 
 \boldsymbol{\Theta}_H
 \;=&
 \Big\{(\mu,\nu)\in  \boldsymbol{\Theta}:
  3 \sqrt{\frac{\mu}{\nu}}   =   \gamma_H \Big\}         \, ,        \\ 
 \boldsymbol{\Theta}_b
 \;=&
  \Big\{(\mu,\nu)\in  \boldsymbol{\Theta}:

3\sqrt{\frac{\mu}{\nu}}  >   \gamma_H \Big\}     \, .
  \end{array}
 \end{displaymath}
We have that
\begin{equation}
    \mathbb{P}\big((\mu,\nu) \in \boldsymbol{\Theta}_b
    \ \vert \     \boldsymbol{x} \big)  \; = \;
    \int_{\boldsymbol{\Theta}_b}    
    g_1(\mu, \nu \vert \boldsymbol{x})  \ {\rm d}\mu \; {\rm d}\nu = 0.078\, ,
\end{equation}
see Figure \ref{Mara}, then we obtain $\delta_H = 0.844$. This result indicates that we do not have enough evidence to reject the hypothesis $H$.

\begin{figure}
    \centering
    \includegraphics[width=0.4\textwidth]{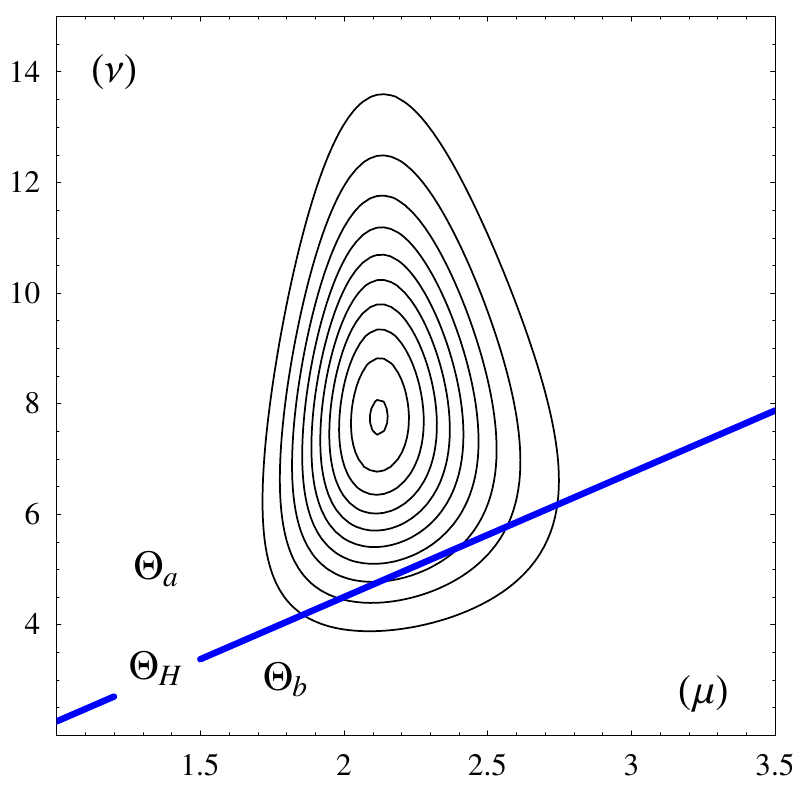}
    \caption{\small{Test on the skewness of the Inverse Gaussian distribution with $\gamma_H = 2$. In the plot  the sets of the partition induced by $H$ are reported. Data refers to Example \ref{Mara}.}}
 \label{Grafico}
\end{figure} 
\end{example}

\subsubsection{Tests involving two independent populations}
In this section we consider some examples concerning comparisons between parameters of two independent populations.

\begin{example}\rm{- \textit{Comparison between means  and precisions  of two independent Normal populations\\}}
\label{bisanzioESEMPIO}

Let us consider a case study on the dating of the core and periphery of some wooden furniture, found in a Byzantine church, using radiocarbon (see \cite{casber01}, p. 409). The  historians wanted to verify  if the mean age of the core is the same as the mean age of the periphery, using  two samples  of sizes  $m =14$  and $n = 9$, respectively, given by

\vspace{0.2cm}
\begin{center}
{\small
\begin{tabular}{c  rrrrr   c   rrr}
\textit{core}    &1294 &1279 &1274 &1264 &1263 \hspace{.4cm} &\textit{periphery}   &1284 &1272 &1256 \\
{}               &1254 &1251 &1251 &1248 &1240  \hspace{.4cm}      &{}             &1254 &1242 &1274 \\
{}               &1232 &1220 &1218 &1210 &{}    \hspace{.4cm}      &{}             &1264 &1256 &1250
\end{tabular} }
\end{center}
\vspace{0.2cm}
We assume that  the age of the core $X$ and of the  periphery $Y$ are distributed as
 $$
 X \sim N(x \vert \mu_1,\phi_1^{-1})  \quad {\rm and}  \quad  Y \sim N(y\vert \mu_2,\phi_2^{-1}),
 $$

 \noindent
 where  $Var(X) = \phi_1^{-1}$ and  $Var(Y) = \phi_2^{-1}$, and we assume that the data are \textit{iid} conditional on the parameters.  We consider for $(\mu_i,\phi_i)$ the Jeffreys' prior
 $$
  g_0^i(\mu_i,\phi_i) \propto  \phi_i^{-1} \cdot {\mathbf{1}}_{\Real \times \Real^+}
  \; , \; i=1,2 \, .
$$
We obtain
 $\bar{x} = 1249.86$,  $\bar{y} = 1261.33$,
 $\bar{d} = \bar{x} - \bar{y} = -11.48$,  while the MLEs for the sample standard deviations are $s_1 = 23.43$ and  $s_2 = 12.51.$
The posterior  distribution for $(\mu_i,\phi_i)$ is the Normal-Gamma law
 $$
 (\mu_i,\phi_i) \mid \boldsymbol{x}, \boldsymbol{y}
  \; \sim  \;
  NG \big( \mu_i , \phi_i \mid \eta_i, \nu_i , \alpha_i, \beta_i \big) \, , \;
 i=1,2,  \,
 $$
with hyperparameters
 $\eta_1 = \bar{x}$,  $\nu_1 = m$,  $\alpha_1 = {\frac{1}{2}}(m-1)$, $\beta_1 = {\frac{1}{2}}m s_1^2$
 \, and  \,
 $\eta_2 = \bar{y}$,  $\nu_2 = n$,  $\alpha_2 = {\frac{1}{2}}(n-1)$, $\beta_2 = {\frac{1}{2}}n s_2^2$,
 and density
$$g^i_1( \mu_i, \phi_i \mid \eta_i, \nu_i, \alpha_i , \beta_i)=\frac{\beta_i^{\alpha_i}\sqrt{\nu_i} }{\Gamma(\alpha_i)\sqrt{2 \pi}}  \phi_i^{\alpha_i-1/2} e^{-\frac{\nu_i \phi_i}{2} (\mu_i- \eta_i)^2}e^{-\beta_i\phi_i}, \; i=1,2.$$
The hypothesis
of interest  $$H_A : \mu_1 -\mu_2 = 0,  \quad \forall \phi_1 > 0, \quad \forall \phi_2 > 0,$$
 identifies the following  subsets in the parameter space
 \begin{equation*}
\begin{array}{ll}
    {\boldsymbol{\Theta}}_a &= \ \Big\{ \,  {\Real}^2 \times {\Real}_+^2 :  \mu_1 < \mu_2 \; \Big\},\\
    {\boldsymbol{\Theta}}_{H_A} &= \ \Big\{ \,  {\Real}^2 \times {\Real}_+^2 :  \mu_1 = \mu_2 \; \Big\},\\
    {\boldsymbol{\Theta}}_b &= \ \Big\{ \,  {\Real}^2 \times {\Real}_+^2 :  \mu_1 > \mu_2 \; \Big\}.
\end{array}     
 \end{equation*}
Then  we can  compute
 $$
  {\mathbb{P}}\big(  (\mu_1, \mu_2, \phi_1, \phi_2) \in {\boldsymbol{\Theta}}_a \ \vert \ \boldsymbol{x}, \boldsymbol{y}\big) \hspace{8cm}
 $$
 \begin{displaymath}
   \begin{array}{ll} \vspace{4 pt}
   \hspace{0.5 cm}
   &=\;   \displaystyle
   \int_{{\boldsymbol{\Theta}}_a} \prod_{i=1}^{2}
   g^i_1( \mu_i, \phi_i \mid \eta_i, \nu_i, \alpha_i, \beta_i)
   \, {\rm d}\mu_1 \, {\rm d}\mu_2  \, {\rm d} \phi_1 \, {\rm d} \phi_2      \\  \vspace{3 pt}
   &=\;   \displaystyle
   \int_{\mu_1< \mu_2} \prod_{i=1}^{2} \frac{\Gamma(\alpha_i+\frac{1}{2})}{\Gamma(\alpha_i)}\Big(\frac{\nu_i}{2 \pi \beta_i}\Big)^{1/2} \Big[1+\frac{\nu_i}{2 \beta_i}(\mu_i-\eta_i)^2\Big]^{-(\alpha_i+\frac{1}{2})}
   \, {\rm d}\mu_1 \, {\rm d}\mu_2          \\
   &=\; \displaystyle
   0.089 \, ,
   \end{array}
 \end{displaymath}
so we have $\delta_H = 0.823$, a value that do not lead to the rejection of the hypothesis. 
 \noindent
We exploited the fact that  the marginal of each   $\mu_i$ is a
 Generalized Student's t-distribution (denoted by \textit{StudentG}) with hyperparameters $\big( \eta_i, \frac{\beta_i}{ \nu_i\alpha_i}, 2 \alpha_i \big)$.
\normalsize
\begin{figure}
    \centering
    \includegraphics[scale=0.8]{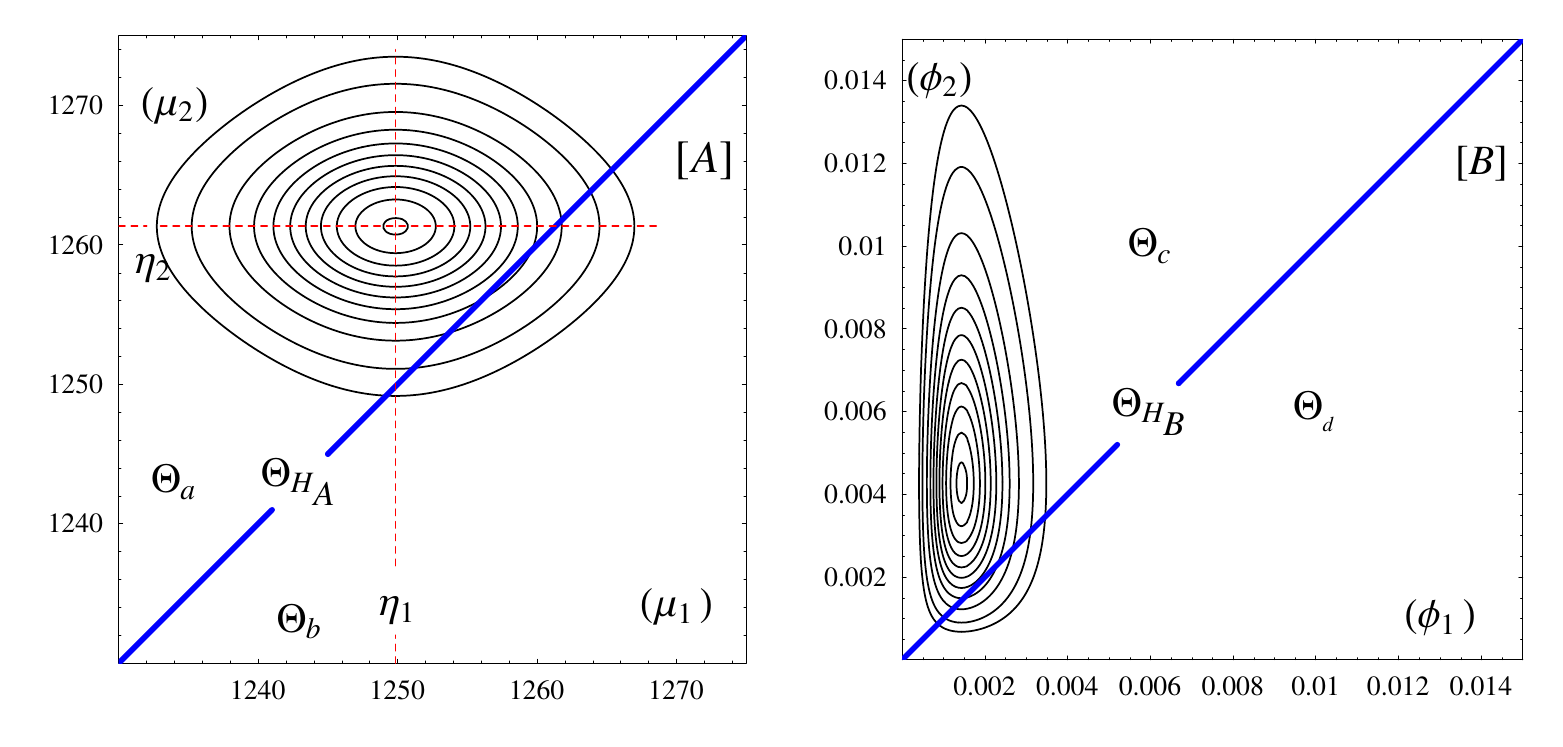}
    \caption{\small{Comparisons between means ([A]) and precisions ([B]) of independent normal populations for data in Example  \ref{bisanzioESEMPIO}. For both cases we show the contour plots of the marginals of $\mu_j$ ([A]) and $\phi_j$ ([B]), and the partition sets associated with the corresponding hypotheses.}}  
    \label{figura-BF}
\end{figure}
Figure  \text{\ref{figura-BF}} $[A]$
in the space  $(\mu_1, \mu_2)$ shows the contour lines of the distribution  
$$StudentG \Big( \mu_1   \big\vert \, \eta_1, {\frac{\beta_1}{\nu_1 \cdot \alpha_1}}, 2\alpha_1 \Big) \cdot StudentG \Big( \mu_2   \big\vert \, \eta_2, {\frac{\beta_2}{\nu_2 \cdot \alpha_2}}, 2\alpha_2 \Big).$$ Note that the  homoscedasticity assumption is not necessary.
Consider now the hypothesis
$$H_B : \phi_1 - \phi_2 =0,  \quad  \forall \mu_1, \mu_2,$$
which determines in the parameter space the subsets
\begin{equation*}
    \begin{array}{ll}
        {\boldsymbol{\Theta}}_c  &= \; \Big\{ \,  {\Real}^2 \times {\Real}_+^2 :  \phi_1 < \phi_2 \; \Big\},  \\
        {\boldsymbol{\Theta}}_{H_B}  &= \; \Big\{ \,  {\Real}^2 \times {\Real}_+^2 :  \phi_1 = \phi_2 \; \Big\}, \\
        {\boldsymbol{\Theta}}_d  &= \; \Big\{ \,  {\Real}^2 \times {\Real}_+^2 :  \phi_1 > \phi_2 \; \Big\}.
    \end{array}
\end{equation*}
We have
\small
 \begin{equation*}
 {\mathbb{P}} \big(  (\mu_1, \mu_2, \phi_1, \phi_2) \in {\boldsymbol{\Theta}}_c \ \vert \ \boldsymbol{x}, \boldsymbol{y} \big)
   =
   % \int_{{\boldsymbol{\theta}}_c}   NG \big( \mu_1 , \phi_1   \mid  \eta_1, \nu_1 , \alpha_1 , \beta_1 \big) \cdot
 %  NG \big( \mu_2 , \phi_2   \mid  \eta_2, \nu_2 , \alpha_2 , \beta_2 \big)
  % \, {\rm d}\mu_1 \, {\rm d}\mu_2  \, {\rm d} \phi_1 \, {\rm d} \phi_2     =   \\
   \int_{\phi_1< \phi_2} \prod_{i=1}^{2} \frac{\beta_i^{\alpha_i}}{\Gamma(\alpha_i)}\phi_i^{\alpha_i-1}e^{-\phi_i\beta_i} \, {\rm d}\phi_1 \, {\rm d}\phi_2  = 0.046,
      \end{equation*}
\normalsize
 \noindent
from which it follows that  $\delta_H =  0.908$ and we reject the hypothesis $H$. To compute the integral we have used the fact that  the marginal of each  $\phi_i$ has Gamma distribution 
 with parameters $(\alpha_i, \beta_i), \ i=1,2$.  
 
 The contour lines of the law  $Gamma ( \phi_1  \vert  \alpha_1,  \beta_1 ) \cdot Gamma ( \phi_2  \vert  \alpha_2,  \beta_2),$  in  the space  $(\phi_1, \phi_2)$, are reported in Figure \text{\ref{figura-BF}} $[B]$. 
% \, $\lhd$
\end{example}

\begin{example}\rm{- \textit{Comparison of the  shape parameter of two Gamma distributions}}
\label{2Gamma}
 \rm{\,} \\
Let us consider two  \textit{iid}  Gamma populations $X_i  \sim Gamma \big( \alpha_i , \beta_i \big), $  $(\alpha_i , \beta_i) \in {\Real}^+ \times {\Real}^+$ , $ i=1,2$, and let us consider two samples of sizes $n_1=9$ and $n_2=12$, respectively,  with  sample means $\bar{x}_1 = 2.811$  and  $\bar{x}_2 = 1.973$,  and  geometric means   $m_{g_1} = 2.116$ and $m_{g_2} = 1.327$.

We are interested in testing  $H: \alpha_1 = \alpha_2.$ The posterior  distribution for $(\alpha_1, \beta_1, \alpha_2, \beta_2)$ is given by

\begin{figure}
    \centering
    \includegraphics[scale=0.8]{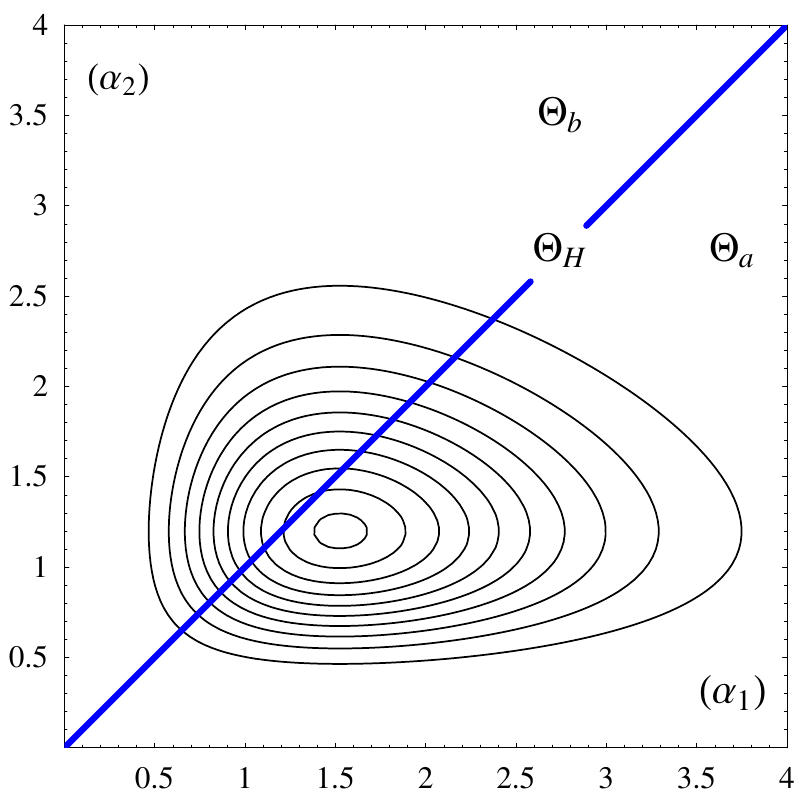}
    \caption{\small{Comparison of the shape parameters of two independent Gamma populations,  using data of  Example \ref{2Gamma}.  The sets   $\Theta_a,  \Theta_b$ and  $\Theta_H$ of the partition are reported. } } 
    \label{confrA}
\end{figure}

 $$
g_1( \alpha_1 , \beta_1, \alpha_2 , \beta_2 \vert  {\boldsymbol{x}_1} , {\boldsymbol{x}_2})=
 g_1^1 (\alpha_1 , \beta_1 \vert {\boldsymbol{x}_1}) \cdot  g_1^2 (\alpha_2, \beta_2 \vert {\boldsymbol{x}_2} ) \, ,
 $$
where
$$g_1^i(\alpha_i, \beta_i \vert {\boldsymbol{x}_i} ) = k_i \cdot g_0 ( \alpha_i,  \beta_i ) \cdot L(\alpha_i , \beta_i \mid {\boldsymbol{x}_i}),$$  with normalizing constant $k_i$, $i=1,2$.
Let  $\boldsymbol{\Theta}_a = \big\{(\alpha_1,\alpha_2) \in \Real^+ \times \Real^+ : \alpha_1 > \alpha_2\big\}$
 and
 $\boldsymbol{\Theta}_b = \big\{(\alpha_1,\alpha_2) \in \Real^+ \times \Real^+ : \alpha_1 < \alpha_2\big\}$ 
  (see Figure {\ref{confrA}}).  In order to test the hypothesis $H$, we compute the probability
 \begin{align*}
     &{\mathbb{P}} ( (\alpha_1,\alpha_2) \in \boldsymbol{\Theta}_b \ \vert \ {\boldsymbol{x}_1}, {\boldsymbol{x}_2} )\\
     &=\displaystyle \int_{\alpha_1 < \alpha_2} \int_{{\Real}^+ \times {\Real}^+} g_1^1 (\alpha_1 , \beta_1 \vert {\boldsymbol{x}_1} ) \cdot g_1^2(\alpha_2 ,\beta_2 \vert {\boldsymbol{x}_2} ) \,  {\rm d} \beta_1  {\rm d} \beta_2  \, {\rm d}\alpha_1  \, {\rm d}\alpha_2     \\ 
 &=   \displaystyle \int_{\alpha_1 < \alpha_2}   \prod_{i=1}^2 k_i  \cdot g_0^\alpha(\alpha_i)   \cdot {\frac{\Gamma(n_i \alpha_i)}{\Gamma(\alpha_i)^{n_i}}}  \cdot  \left(\frac{{m_{g_i}}}{n_j \bar{x}_i} \right)^{n_i \alpha_i} {\rm d}\alpha_1 {\rm d}\alpha_2=    0.311 
 \end{align*}
and, since $\delta_H = 0.378$, we do not reject $H$.
\end{example}
%------------------------------------------------------------------------------------------------

%------------------------------------  ------------------------------------------------
\section{Comparison with the FBST}\label{sec3}
 
In this section we present a comparison of the BDT with the Full Bayesian Significance Test (FBST) as presented  in \cite{perstern20}, which provides an overview of the $e$-value.

In order to facilitate the discussion, let us briefly review the definition of the $e$-value and the related testing procedure. The FBST can be used with any standard parametric statistical model, where $\boldsymbol{\theta} \in \Theta\ \subseteq\ \Real^p$. It tests a sharp hypothesis $H$ which identifies the null set $\Theta_{H}$. The conceptual approach of the FBST consists of determining the $e$-value that represents the Bayesian evidence against $H$. To construct this measure, the authors introduce the posterior surprise function and its supremum, given respectively by
\begin{equation*}
    s(\boldsymbol{\theta}) = \frac{g_1(\boldsymbol{\theta} \vert \boldsymbol{x})}{r(\boldsymbol{\theta})} \quad \text{ and }\quad s^*=s(\boldsymbol{\theta}^*) = \sup_{\boldsymbol{\theta} \in \Theta_{H}} s(\boldsymbol{\theta}),
\end{equation*}
where $r(\boldsymbol{\theta})$ is a suitable reference function to be chosen. Then, a tangential set is defined as
\begin{equation*}
    \overline{T}(s^*) = \{\boldsymbol{\theta} \in \Theta \vert s(\boldsymbol{\theta}) > s^*\},
\end{equation*}
to the sharp hypothesis $H$, also called a Highest Relative Surprise Set (HRSS), which includes all parameter values $\boldsymbol{\theta}$ that attain a larger surprise function value than the supremum $s^*$ of the null set. Finally, the $e$-value, that represents the Bayesian evidence against $H$, is defined as 
\begin{equation*}
    \overline{ev}(H) = \overline{W}(s^*) = \int_{\overline{T}(s^*)} g_1(\boldsymbol{\theta} \vert \boldsymbol{x}) \ d\boldsymbol{\theta}.
\end{equation*}
On the contrary,  the $e$-value in support of $H$ is $ev(H) = 1 - \overline{ev}(H_0)$, which is evaluated by means of the set $T(s^*)=\Theta  \setminus \overline{T}(s^*)$ and the cumulative surprise function $W(s^*) = 1 - \overline{W}(s^*)$. In conclusion, the FBST is the procedure that rejects $H$ whenever $\overline{ev}(H)$ is large.

As pointed out in \cite{perstern20} (Section 3.2) ``\textit{the role of the reference density is to make $\overline{ev}(H)$ explicitly invariant under suitable transformations of the coordinate system}''.  A first non-invariant definition of this measure, which corresponds to the use of a flat reference function $r(\theta)\propto 1$ in the second formulation, has been given in \cite{perstern99}. The first version involved the determination of the tangential set $\overline{T}$ starting only from the posterior distribution, whereas in the second, a corrective element has been introduced by also including the reference function. Some of the suggested choices for the reference function are the use of uninformative priors such as ``\textit{the uniform, maximum entropy densities, or Jeffreys’ invariant prior}'' (see \cite{perstern20}, Section 3.2). \\

\subsection{Similarities and differences between the procedures}
The most striking similarity between the FBST and the BDT is that both tests, fully accepting the likelihood principle and relying on the posterior distribution of the parameter $\boldsymbol{\theta} \in \boldsymbol{\Theta}$, are clearly Bayesian.

Another important similarity is that, asymptotically, both tests lead to the rejection of the hypothesis $H$ when it is false (i.e. when we test $\theta_H \neq \theta^*$ where $\theta^*$ is the true value of the parameter).  On the contrary, if $\theta_H=\theta^*$ they have a different asympototic behaviour (see Proposition \ref{prop} for the BDM and Section 3.4 in \cite{perstern20} for the $e$-value). 

Certainly, the FBST has a more general reach than the BDT. Indeed, it examines the entire class of sharp hypotheses, whereas the extension of the BDT to such hypotheses is not straightforward and, currently, is limited to considering the subclass of the hypotheses expressed as $H:\varphi=\varphi_H$ that are able to partition the parameter space $\boldsymbol{\Theta}$ as $\big\{ \boldsymbol{\Theta}_a, \, \boldsymbol{\Theta}_H, \, \boldsymbol{\Theta}_b \big\}$. Moreover, notice that while the integration sets $\boldsymbol{\Theta}_a$ and $\boldsymbol{\Theta}_b$ are determined exclusively by the hypothesis, the tangential set $\overline{T}$ depends on the hypothesis, the posterior density and the choice of the reference function. It is questionable, on the other hand, whether the $e$-value is as easily computable as the BDM is in cases where the parameter space has dimension higher than 1.

Unlike the BDM, the elimination of nuisance parameters is not recommended when using the $e$-value. In fact, this measure is not invariant with respect to marginalisations of the nuisance parameter and the use of marginal densities to construct credible sets may produce inconsistency.

It is easy to see that one can create an analogy between the \textit{p-value}, the $e$-value and $\delta_H$. Regarding frequentist \textit{p-value}s, the sample space is ordered according to increasing inconsistency with the assumed null hypothesis $H$. The FBST instead orders the parameter space according to increasing inconsistency with the assumed null hypothesis $H$, based on the concept of statistical surprise. In the same way, it can be seen that the probability in $\eqref{formula:tail1}$ has to do with the posterior probability of exceeding $\theta_H$ in a direction in contrast with the data (namely, the side where there is more posterior probability).

Another similarity occurs when considering the reference density $r(\theta)$ as the (possibly improper) uniform density, since the first and second definitions of evidence define the same tangent set, i.e. the HRSS and the HPDS coincide. Then, for a scalar parameter $\theta$, since the BDM is linked to the equi-tailed credible regions while the $e$-value is linked to the HPDS, we have that if:
\begin{itemize}
  \item  $g_1(\theta  \vert \boldsymbol{x})$ is symmetric and unimodal, then
  $\overline{ev}(H) = \delta_H$;

  \item $g_1(\theta  \vert \boldsymbol{x})$  is asymmetric and unimodal (for instance with positive skewness)  and $m_1 < \theta_H$  [$\theta_H < m_1$],  then
    $\overline{ev}(H) > \delta_H$ [$\overline{ev}(H) < \delta_H$]. When $m_1=\theta_H$  we have $0=\delta_H < \overline{ev}(H)$.
\end{itemize}

\subsubsection{Simulation study}
In order to determine the resulting false-positive rates of both the FBST and the BDT, we conduct a simulation study for specific sample sizes. 

Let  $\boldsymbol{x}=(x_1, \dots, x_n)$  be an \textit{iid} sample  of size $n$ from the exponential distribution $X \sim Exp\big(x \vert 1/\theta^* \big)$, with  $\theta^*=1.2$. We are interested in testing the hypothesis $H: \theta_H=\theta^*=1.2$.  Assuming a Jeffreys' prior $g_0(\theta) \propto \theta^{-1}$,  the posterior distribution is an $InvGamma(\theta \vert n, \sum x_i) $; see Example \ref{primoESEMPIO}.

Table \ref{tab:erroreI} shows the simulation results for three different values of the threshold $\omega=\{0.90,0.95,0.99\}$, for $S=50000$ simulations and $D=50000$ posterior draws. Across the different sample sizes considered, the false-positive rates are very similar for both tests and, as we expect since we are using objective priors (see \cite{bayarri2004interplay}), they are close to the error of the first type $\alpha=\{0.10, 0.05, 0.01\}$, related to $\omega$. Similar results, not reported here, were found adopting a Poisson model.

\begin{table}[!htbp]
\resizebox{\columnwidth}{!}{
{\setlength{\extrarowheight}{5pt}
\begin{tabular}{|l|llllllllllll|}
\hline
\multicolumn{1}{|c|}{} & \multicolumn{4}{c}{$\boldsymbol{\omega = 0.90}$}&\multicolumn{4}{c}{$\boldsymbol{\omega = 0.95}$}& \multicolumn{4}{c|}{$\boldsymbol{\omega = 0.99}$}\\ \hline
\multicolumn{1}{|c|}{} & \multicolumn{4}{c}{$n$} & \multicolumn{4}{c}{$n$} & \multicolumn{4}{c|}{$n$}\\
\multicolumn{1}{|c|}{} & \multicolumn{1}{c}{10} & \multicolumn{1}{c}{100} & \multicolumn{1}{c}{1000} & \multicolumn{1}{c|}{10000} & \multicolumn{1}{c}{10} & \multicolumn{1}{c}{100} & \multicolumn{1}{c}{1000} & \multicolumn{1}{c|}{10000} & \multicolumn{1}{c}{10} & \multicolumn{1}{c}{100} & \multicolumn{1}{c}{1000} & \multicolumn{1}{c|}{10000} \\ \cline{1-13} 
{\textit{\textbf{\begin{tabular}[c]{@{}l@{}}e-value\\ $r(\theta) \propto 1$\vspace{0.1cm}\end{tabular}}}}       
& 0.102  & 0.100 & 0.099 & \multicolumn{1}{l|}{0.098} & 0.052 & 0.050 & 0.051 & \multicolumn{1}{l|}{0.049} & 0.011 & 0.010 & 0.011 & 0.011 \\ \cline{1-1}
{\textit{\textbf{\begin{tabular}[c]{@{}l@{}}e-value\\ $r(\theta)=g_0(\theta)$\vspace{0.1cm}\end{tabular}}}}       & 0.101 &  0.102 & 0.100 & \multicolumn{1}{l|}{0.098} & 0.051 & 0.050 & 0.051 & \multicolumn{1}{l|}{0.049} & 0.010 & 0.010 & 0.011 & 0.011 \\ \cline{1-1}
\textit{$\boldsymbol{\delta_H}$}\vspace{0.25cm} & 0.103  & 0.102 & 0.101 & \multicolumn{1}{l|}{0.099} & 0.053  & 0.049 & 0.052 & \multicolumn{1}{l|}{0.049} & 0.010 & 0.009 & 0.011 & 0.011 \\ \hline
\end{tabular}}}
\vspace{0.1cm}
\caption{False positive rates for different sample sizes $n$ and different thresholds $\omega$.}
\label{tab:erroreI}
\end{table}

\subsubsection{Some Examples}
\vspace{0.3cm}
In order to compare the BDM and the $e$-value, let us consider different situations and then examine the results.

\begin{example}  \rm{\,}
 ({\textit{Continuation of Example \ref{primoESEMPIO}}}) \\
As a first comparative scenario, consider the test performed in Example \ref{primoESEMPIO} in which $\theta_H=2.4$ and additionally the case in which $\theta_H=0.7$. Since the posterior  $g_1(\theta\vert  \boldsymbol{x})$ has a positive skewness and $m_1 < \theta_H=2.4$ then $\overline{ev}(H) > \delta_H$, on the contrary, for $m_1 > \theta_H=0.7$ then $\overline{ev}(H) < \delta_H$. Indeed, we find the results reported in Table \ref{tab:Ex1}.
\begin{table}[!htbp]
\centering
\begin{tabular}{l|ccc|ccl|}
\multicolumn{1}{l|}{} & \multicolumn{3}{c|}{\textbf{$\theta_H=2.4$}}                                                                                        & \multicolumn{3}{c|}{\textbf{$\theta_H=0.7$}}                                                                                                             \\ \hline
                      & \multicolumn{2}{c|}{\textit{\textbf{e-value}}}                                      & \multirow{2}{*}{\textit{\textbf{$\boldsymbol{\delta_H}$}}} & \multicolumn{2}{c|}{\textit{\textbf{e-value}}}                                      & \multicolumn{1}{c|}{\multirow{2}{*}{\textit{\textbf{$\boldsymbol{\delta_H}$}}}} \\
\textit{\textbf{}}    & \textit{$r(\theta)\propto 1$} & \multicolumn{1}{c|}{\textit{$r(\theta)=g_0(\theta)$}} &                                               & \textit{$r(\theta)\propto 1$} & \multicolumn{1}{c|}{\textit{$r(\theta)=g_0(\theta)$}} & \multicolumn{1}{c|}{}                                              \\ \hline
{[}A{]} $  n = 6$              & 0.909                         & \multicolumn{1}{c|}{0.866}                          & 0.832                                         & 0.646                         & \multicolumn{1}{c|}{0.847}                          & 0.886                                                              \\
{[}B{]} $n=12$              & 0.978                         & \multicolumn{1}{c|}{0.968}                          & 0.960                                         & 0.899                         & \multicolumn{1}{c|}{0.957}                          & 0.968                                                                   \\
{[}C{]} $n=24$               & 0.999                         & \multicolumn{1}{c|}{0.998}                          & 0.997                                         & 0.991                         & \multicolumn{1}{c|}{0.997}                          & 0.997         \end{tabular}
\vspace{0.4cm}
\caption{The table shows, for the 3 different cases examined in Example \ref{primoESEMPIO}, the values of $\delta_H$ and of the $e$-value considering, as a reference distribution, both a flat reference function and a Jeffreys' prior.}
\label{tab:Ex1}
\end{table}
\end{example}
The differences between the $e$-value and $\delta_H$, which in this example appear to be modest, can actually become meaningful when the posterior has a greater asymmetry and heavy tailes. In such case, comparing different hypotheses, the FBST always leads to favour the hypothesis with higher density. Moreover, the $e$-value may be more or less robust w.r.t. the  position of $\theta_H$, as it is highlighted in the example below.
\begin{example}  \rm{- \textit{Test on the mean of the Inverse Gaussian distribution}}
\rm{\,}\\
\label{GI}
Consider a random variable $X$ with  Inverse Gaussian distribution $X \sim IG(x \vert \mu, \nu_0)$,   $\mu \in {\Real}^+$ and  $\nu_0$ known.  Given an \textit{iid}  sample $\boldsymbol{x}$ of size $n$, the likelihood  function   for $\mu$ is
  $L(\mu\vert\boldsymbol{x}) \propto
  \exp \left\{ -n \nu_0 \cdot\left( \frac{\bar{x}}{2\mu^2}- \frac{1}{\mu}\right) \right\}.$
Adopting  the Jeffreys' prior  $g_0(\mu) \propto \frac{1}{\sqrt{\mu^3}}$, we obtain the posterior   
\begin{equation*}g_1(\mu\vert  {\boldsymbol x}) \propto \frac{1}{\sqrt{\mu^3}}
 \cdot
 \exp \left\{ -n \nu_0 \cdot\left( \frac{\bar{x}}{2\mu^2}- \frac{1}{\mu}\right) \right\}.
\end{equation*}
 We are interested in testing the hypothesis $H: \, \mu = \mu_H$ and we consider a sample of size $n=8$ for which $\bar{x}  = 4.2$ and $m_1 = 4.483$. 
For $\nu_0=5$, we choose to test $H_A: \mu = 2.5$ and $H_B: \mu = 12$. The results of the analysis are displayed in Table \ref{tab:IG} and Figure \ref{figuraPROC}. If we choose $\omega = 0.95$ as a rejection threshold in both cases, and with both references, we are lead to opposite inferential conclusions. 

\vspace{0.4cm}
  
\begin{table}[!htbp]
\centering
\begin{tabular}{c|cc|c}
                   & \multicolumn{2}{c|}{\textit{\textbf{e-value}}}                   & \multirow{2}{*}{\textit{$\boldsymbol{\delta_H}$}} \\
\textit{\textbf{}} & \textit{$r(\theta)\propto 1$} & \textit{$r(\theta)=g_0(\theta)$} &                                               \\ \hline
$H_A: \mu = 2.5$   & 0.803                             & 0.848                                & 0.975                                         \\
$H_B: \mu = 12$    & 1                             & 1                                & 0.907             \end{tabular}
\vspace{0.4cm}
\caption{For the two different hypothesis examined in Example \ref{GI}, the table shows $\delta_H$ and the $e$-value considering, as a reference distribution, both a flat reference function and a Jeffreys' prior.}
\label{tab:IG}
\end{table}

\begin{figure}
    \centering
    \includegraphics[scale=0.8]{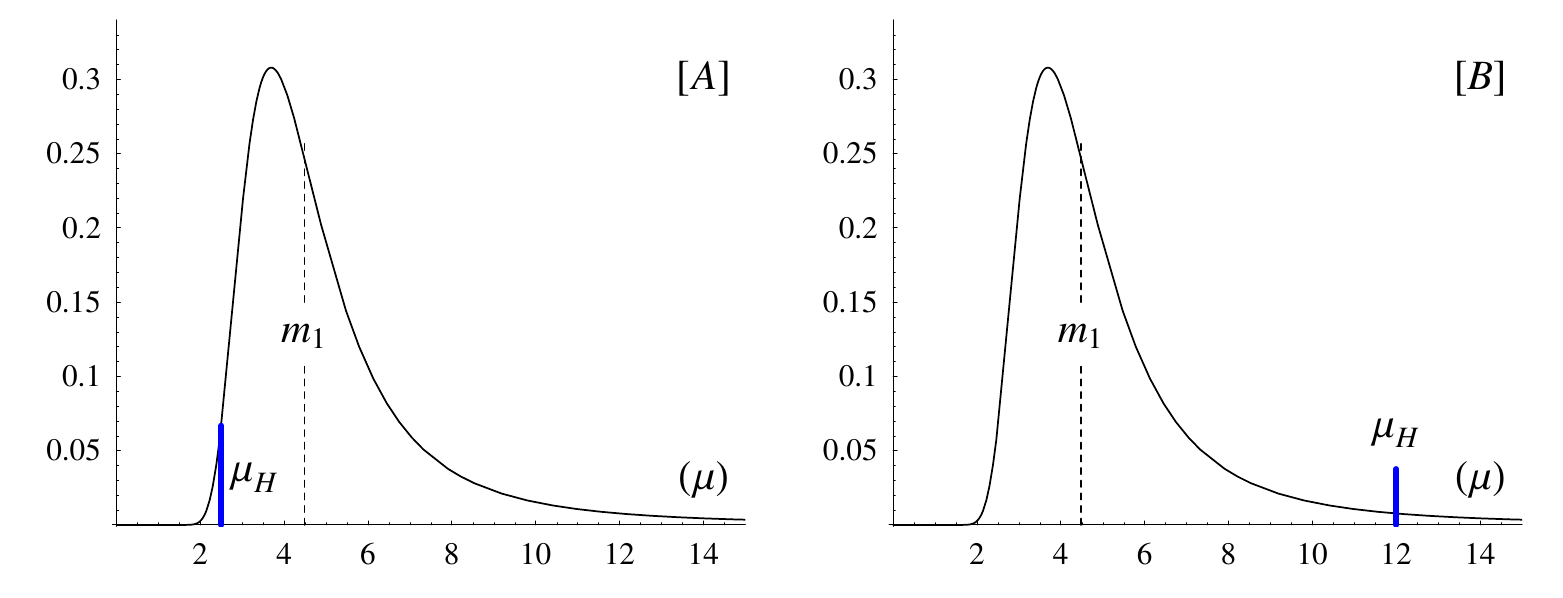}
    \caption{{\small Posterior density function  $g_1(\mu \vert \boldsymbol{x} )$ associated to  Example \ref{GI}. In   $[A]$ we have  $\mu_H = 2.5 < m_1$, while  in $[B]$ $\mu_H = 12 > m_1.$
    \label{figuraPROC}}
}
\end{figure}
\end{example}

\begin{example}\rm{\,}
 ({\textit{Continuation of Examples \ref{modelloGAMMA}, \ref{cv}, \ref{Mara}}}) \\
Let us now compare the results obtained with the FBST and the BDT for the Examples  \ref{modelloGAMMA}, \ref{cv} and \ref{Mara}, when fixing a value of 0.95 as a rejection threshold. 

The conclusions reached with the FBST and with the BDT for Example \ref{modelloGAMMA}, which can be seen in Table \ref{3fbst}, are the same (for both reference functions considered) although, in some cases, there are substantial differences between the values of the evidence measures. To summarise, the hypothesis $H_B$ has to be rejected while not enough evidence is available for the rejection of the hypotheses $H_A$ and $H_C$.

Moving on to Example \ref{cv} we can say that the analysis of the findings with the two different tests appears to be more complex than the previous one, see Table \ref{4fbst}. In case $[A]$, for both BDT and FBST with the flat reference function, there is not enough evidence to reject the hypothesis. On the contrary, if one considers the FBST with the Jeffreys' prior as reference function, one is led to reject this hypothesis. In case [B], by rejecting the hypothesis, the BDT is in agreement with the FBST with the Jeffreys' reference function in contrast to the FBST with the flat reference function for which there is not enough evidence to reject it. 

Finally, in the case illustrated in Example \ref{Mara}, the conclusion reached with the FBST and with the BDT is the same (for both reference functions considered), i.e. there is no enough evidence to reject the hypothesis (see Table \ref{5fbst}). It should be noted that, again, there are substantial differences between the values of the evidence measures. 

\vspace{0.2cm}

\begin{table}[!htbp]
\centering
\begin{tabular}{l|cc|c}
                   & \multicolumn{2}{c|}{\textit{\textbf{e-value}}}                   & \multirow{2}{*}{\textit{$\boldsymbol{\delta_H}$}} \\
\textit{\textbf{}} & \textit{$r(\theta)\propto 1$} & \textit{$r(\theta)=g_0(\theta)$} &                                               \\ \hline
$H_A: \alpha = 2.5$   & 0.557                             & 0.186                                & 0.570                                         \\
$H_B: \mu = 6$    & {0.984}                          & {0.963}                                &  0.976    
\\
$H_C: \sigma^2 = 2$    & {0.784}                          & {0.562}                                & 0.846
\end{tabular}
\vspace{0.4cm}
\caption{The table shows the results of the Example \ref{modelloGAMMA} on the test on the shape parameter, mean and variance of the Gamma distribution. For the $e$-value we have considered, as a reference distribution, both a flat reference function and a Jeffreys' prior.}
\label{3fbst}
\end{table}

\begin{table}[!htbp]
\centering
\begin{tabular}{c|cc|c}
                   & \multicolumn{2}{c|}{\textit{\textbf{e-value}}}                   & \multirow{2}{*}{\textit{$\boldsymbol{\delta_H}$}} \\
\textit{\textbf{}} & \textit{$r(\theta)\propto 1$} & \textit{$r(\theta)=g_0(\theta)$} &                                               \\ \hline
$[A]\ n=10$   & {0.364}                            &            {0.999}                      & 0.570                                         \\
$[B]\ n=40$    & {0.924}                        & {1}                                & 0.972                       
\end{tabular}
\vspace{0.4cm}
\caption{The table shows the results of the Example \ref{cv} on the test of the coefficient of variation for a Normal distribution. For the $e$-value we have considered, as a reference distribution, both a flat reference function and a Jeffreys' prior.}
\label{4fbst}
\end{table}

\begin{table}[!htbp]
\centering
\begin{tabular}{c|cc|c}
                   & \multicolumn{2}{c|}{\textit{\textbf{e-value}}}                   & \multirow{2}{*}{\textit{$\boldsymbol{\delta_H}$}} \\
\textit{\textbf{}} & \textit{$r(\theta)\propto 1$} & \textit{$r(\theta)=g_0(\theta)$} &                                               \\ \hline
$H:  \gamma=2$   & {0.650}                            & {0.691}                                & 0.844                                         
\end{tabular}
\vspace{0.4cm}
\caption{The table shows the results of the Example \ref{Mara} on the test of the skewness coefficient of the Inverse Gaussian distribution. For the $e$-value we have considered, as a reference distribution, both a flat reference function and a Jeffreys' prior.}
\label{5fbst}
\end{table}

The calculation of the FBST for a scalar parameter of interest without nuisance parameters, has been carried out through the function defined in the `fbst' package \cite{kelter2022fbst} for \Rlogo. Instead, tangential sets $\overline{T}$ and its integrals, for Examples \ref{modelloGAMMA}, \ref{cv} and \ref{Mara}, were determined by means of the \textit{Mathematica} software. Browsing through the code that leads to the calculation of these measures (see \cite{maramanca}), it is evident that more work is required for the calculation of the integration region related to the FBST. In this sense, the BDT appears to be easier to apply. 
\end{example}

%------------------------------------ Conclusions -------------------------------------------------------------------------
\section{Conclusions}\label{sec5}
We  propose a new measure of evidence in a  Bayesian perspective.
From an examination of the examples illustrated, the conceptual simplicity of the proposed method is evident as well as its theoretical consistency. We have presented some simple cases where the computation of the BDM is straightforward. 

In some situations,  the BDM can be usefully applied adopting a subjective prior.
It is indeed interesting the situation where  one or more statisticians choose the hypothesis $H$ and the prior according to his or their knowledge. In such cases the BDT  would have a confirmatory value.  The use of subjective priors  must be accompanied by a robustness study especially in the case of small sample sizes.

So far we have considered only hypotheses that induce a partition on the parameter space, but the extension of the definition and the analysis of the BDT to more complex hypotheses is under investigation. Theoretical and computational developments in more general contexts are also being explored.
%---------------------------------------------------------------------------------------------------------------------------

\end{document}